\newtheorem{assumption}{Assumption} 
\newtheorem{thm}{Theorem}
\theoremstyle{definition}
\newtheorem{lem}{Lemma}
\newtheorem{rem}{Remark}
\newtheorem{prop}{Proposition}
\newtheorem{defn}{Definition}
\title{Identification, estimation and inference in Panel Vector Autoregressions
using external instruments}
\author{Raimondo Pala\thanks{Email: raimondopala@gmail.com} } 
\date{\today}
\begin{document}
	\maketitle
	\begin{abstract}
This paper proposes an identification inspired from the SVAR-IV literature that uses external instruments to identify PVARs, and discusses associated issues of identification, estimation, and inference.
  I introduce a form of local average treatment effect - the $\mu$-LATE - which arises when a continuous instrument targets a binary treatment. Under standard assumptions of independence, exclusion, and monotonicity, I show that externally instrumented PVARs estimate the  $\mu$-LATE. Monte Carlo simulations illustrate that confidence sets based on the Anderson-Rubin statistics deliver reliable convergence for impulse responses.
  As an application, I instrument state-level military spending with the state's share of national spending to estimate the dynamic fiscal multiplier. I find multipliers above unity, with effects concentrated in the contemporaneous year and persisting into the following year.

\indent {\bf{JEL Classification}}:  C32, C33. \\
\noindent {\bf{Keywords}}: Vector autoregressive models, panel data, identification, instruments.  

\end{abstract}
\clearpage

Panel vector autoregressions (PVAR) are one of the standard tools
for the estimation of dynamic causal effects in macroeconomics. However,
their causal interpretation is frequently limited. Sometimes they
are described as identifying a temporally related causality, ``Granger
causality''; sometimes, as discussed in \citet{pala2024pvarcontrol},
they may posses a contemporaneous causal interpretation under either
exogeneity assumptions, or a suitable control type of assumption.
However, Granger causality is not contemporaneous causality\footnote{In the words of \citet{granger2014forecasting}: ``A better term
might be temporally related, but since it is such a simple term
we shall continue to use it .}; exogeneity is hard to satisfy due to the rich endogeneity among
macroeconomic variables (\citet{Nakamura2018a}); and the control
type of approach proposed by \citet{pala2024pvarcontrol} may not
be satisfied if some units cannot act as controls for some others.
However, when all the previous cases
fail, instruments could be utilised to carry causal claims.

This paper shows that it is possible to identify causal
effects by the means of instrumental variables in the case of panel
vector autoregressions. By implementing the same approach as \citet{gertler2015monetary,Mertens2013,Mertens2014,StockWatson2018,Olea2021}
- defined as Proxy-SVAR or SVAR-IV- in PVARs it is possible to retrieve a causal
estimand that I define as $\mu-\text{LATE}$. Normally, the causal literature has a clear target in mind in the case of a dummy policy and a dummy instrument (\citet{angrist1996identification}) - the LATE. LATEs can be viewed as a special case of principal stratification for which there are 4 categories of compliance status. In the case of continuous instruments the interpretation becomes troublesome because there are infinitely many possible strata (\citet{antonelli2023principal}). Contrarily, $\mu$-LATEs seemlessy move from continuous variables to a binary interpretation akin to a LATE. In particular, they compare the value of the outcome and policy under a 1\% and 0\% instrument assignment.

In this special context, the defiers are (under positive
monotonicity) units that do not observe an increase (decrease) in
their policy variable residuals when the instrument increases (decreases).
Moreover, because usually the residuals of a PVAR are assumed to be
continuous and normally distributed, the $\mu$- LATE strongly depends on the instrument and policy continuity. This latter feature is rarely discussed
by the causal literature but imposes a strong linearity assumption.\footnote{In fact, most of the IV literature has focused on either binary (\citet{angrist1996identification})
or multi-valued instruments (\citet{heckman2001policy,vytlacil2002independence}),
but rarely discussed how a LATE could emerge from the comparison of
of two predicted values of the outcome variables under a continuous
instrument case. In general, semi parametric or non parametric estimators
tend to be preferred when the data allows it. } This is, however, necessary. Because macroeconomic data is frequently
relatively small in the unit component, making use of a linearity
assumption to obtain meaningful estimates is often the only possibility.
\footnote{This point, and its many drawbacks, are developed in \citet{kolesar2024dynamic}.}

On the other hand, inference can be carried by simply translating
the conclusions of \citet{Olea2021} regarding VAR-IVs to their panel counterpart. This
implies that the optimal approach to inference is to make use of the
Anderson Rubin test statistic (henceforth defined as AR - see \citet{Fieller1944,Anderson1949}
for its definition) in the presence of one instrument, and, at the
current state of the literature, the conditional likelihood ratio
test (henceforth defined as CLR - see \citet{andrews2006optimal}
for its definition) in the presence of multiple instruments. 

I give a context to the proposed causal framework through an original
application in which I compute the dynamic fiscal military multiplier in the United States. Using local level data can be advantageous for several reasons. First, \citet{Nakamura2014}
argue that in a monetary union, the central bank (the Federal Reserve) cannot
raise interest rates in some states relative to others, and federal tax policy is
common across states in the union. This means that the open economy multiplier identifies the measure of fiscal policy that does not
depend on monetary movements.
Second, the advantage of using military spending is that the DD-350 military procurement forms
made available from the US Department of Defense provide a clean, direct, and localised measure of
government spending. Unfortunately, data of such quality, that starts from such an early age - 1966 -
is not available for any other form of spending. 

For this reason, I consider a PVAR that includes military state-level
spending as instrumented by the states fraction of total national
military spending, with GDP growth as the outcome variable. Such IV
approach is generally defined as Bartik instrument and is becoming
increasingly popular in social sciences, where the use of local-level
data is intensifying (see \citet{bartik1991benefits,goldsmith2020bartik}).
My findings indicate that the fiscal multiplier may be about $\sim1.7$
on impact, a value not dissimilar from the ones previously observed
by the state-multiplier literature (\citet{chodorow2019geographic}).
The advantage of using PVARs over simple panel linear regressions becomes clear when considering the usefulness of Impulse Response Functions.
In this sense, my findings suggest a multiplier of about $1.5$ one
year ahead and negligible after, suggesting the existence of large cumulative multipliers.

This paper contributes to several streams of the literature. One natural
contribution is at the intersection of the causal inference and the
time series fields. In this sense, it complements \citet{pala2024pvarcontrol}
in the research agenda oriented to give a causal interpretation to
panel vector autoregressions and, more broadly, the literature that
uses the Rubin causal model to motivate the causal interpretation
of time series models (\citet{menchettibojinov2022,menchetticipollinimealli2022,RambachanSheppard2021,bojinovshephard2019}).
Second, it contributes to the econometric literature by extending
the identification of structural VARs through an external instrument
to panel VARs. Third, it indicates a potentially interesting recipe
for the interpretation of any estimator in the case of continuous
instruments that is different from the current literature. Fourth,
it extends the common knowledge of the good coverage properties of
the Anderson-Rubin statistics for the impulse response functions generated
by a PVAR-IV, indicating a venue for the construction of reliable
confidence sets. Fifth, it contributes to the applied literature for
the estimation of the dynamic fiscal multipliers.

The discussion is organized as follows. Section \ref{sec:Identification}
introduces the potential outcome framework and the $\mu-\text{LATE}$.
Section \ref{sec:Assumptions} introduces the assumptions required
for the identification of the causal effects. Section \ref{sec:Estimation}
introduces the estimation procedure of the PSVAR-IV. Section \ref{sec:Inference}
discusses the Anderson Rubin statistics\footnote{In appendix \ref{sec:Appendix-B:-Monte} I provide the main theorems
and simulations that showcase the good coverage properties of the
statistic.}. Section \ref{sec:Fiscal-multiplier} discusses the estimation of
the dynamic fiscal multiplier using the PSVAR-IV.

Finally, allow me to introduce some useful notation. A set of observations
of variable $s$ for state $i$ at time $t$, $x_{s,it}$, is part
of a larger matrix of variables $\boldsymbol{x}_{s}$, and $\boldsymbol{x}_{2:S}$
indicates the partition from the second to the $S$ variable of such
matrix. $\mathbb{E}(X)$ will indicate the expected value of the variable
$X$, $var(X)$ will indicate its variance, $cov(X,Y)$ the covariance
between variables $X$ and $Y$, and $\boldsymbol{cov}(X,Y)$ their
variance-covariance matrix, which contains their variance on the diagonal
and their covariance in the off diagonal. For $R$ being a square
matrix, $R_{a,b}$ indicates the entry of $R$ in the row $a$ and
column $b$. $X\perp Y$ will indicate independence between $X$ and
$Y$.

\section{Identification \label{sec:Identification}}

Let us say the researcher is interested in the estimation of a dynamic
causal effect, represented as the impact of a change in a \emph{policy
}variable $W_{i,t}$ on one or many \emph{outcome }variables $Y_{j,i,t}$.
As such, the potential outcomes of the outcome variables can be represented
as follows:
\[
Y_{j,i,t}(w,z)=Y_{j,i,t}((W_{i,1:t-1},w,W_{i,t+1:T})(Z_{i,1:t-1},z,Z_{i,t+1:T})).
\]
Such definition is similar to the one introduced by \citet{RambachanSheppard2021}
and indicates that the potential outcome of the different outcome
variables $j$ depend on the policy assignment $w$ and on the instrumental
variable $z$\footnote{Such form was first introduced in LATEs by \citet{angrist1996identification}.}.
In the leading example of this paper, there will only be GDP growth
as  outcome variable ($j=1$), while $W_{i,t}$ will indicate a fiscal
policy variable. Moreover, $Z_{it}$ will be the instrumental variable.
Finally, the index $i,t$ will refer to region $i$ at time $t$.\footnote{In this paper, I will solely focus on a clean case of one instrument, one policy. This choice stems from the higher likelihood of researchers to come up with identification strategies that involve a clear one insturment, one instrumented, many outcomes scenarios. It can be extended to multiple instruments and multiple instrumented variables. The analytical extensions are provided in \citet{Olea2021}.} All variables are assumed to be continuous.

I define a $\mu-\text{LATE}_{j}$ as the following
causal effect.
\begin{defn}
A $\mu\text{-LATE}_{j}$ is the LATE of receiving assignment $w$
versus assignment $w'$ for those units that comply with the assignment
\[
\mu\text{-LATE}_{j}
= \mathbb{E}\bigl[ Y_{j}(w) - Y_{j}(w') \,\big|\, W(z)=w \bigr].
\]
\end{defn}

\section{Assumptions \label{sec:Assumptions}}

The assumptions required to give a causal interpretation to a LATE
estimator are independence, exclusion and monotonicity.  Here the assumptions
are put on the residuals of the PVAR.  In particular,  define $\widetilde{W}_{it}=\mathbb{E}[W_{it}|\omega_{it}]$ and $\widetilde{Y}_{j,it}=\mathbb{E}[Y_{j,it}|\omega_{it}]$, where $\omega_{it}=(W_{i,t-1}.^{\prime}.. W_{i,t-p}^{\prime},Y_{1,i,t-1}^{\prime},..,Y_{1,i,t-p}^{\prime},Y_{J,i,t-1}^{\prime},..,Y_{J,i,t-p}$ are the regressors. The PVAR estimates the impact of variations in $\widetilde{W}_{it}$ on $\widetilde{Y}_{j,it}$.

The treatments are assumed
to be continuous, so that $\widetilde{w}^{\circ}$ is any value extracted
from the distribution of $\widetilde{W}$ and $z^{\circ}$ is any
value extracted from the distribution of $Z$.
\begin{assumption}
\label{assu:(Independence)}\textbf{(Independence)} For all $\widetilde{w}^{\circ}\in\widetilde{W}$,
all $\widetilde{z}^{\circ}\in Z$, all $t\geq1$, all $i\geq1$, and
all $j\geq1$, it holds that
\begin{equation}
\{\widetilde{Y}_{j,i,t}(\widetilde{w}^{\circ},z^{\circ}),\widetilde{W}_{i,t}(z^{\circ})\}\perp Z_{i,t}\label{eq:Indep}
\end{equation}
\end{assumption}
Assumption \ref{assu:(Independence)} establishes that the instruments
are as good as randomly assigned with respect to any of the outcome
variable's residuals or any of the policy variable's residuals. The
researcher will normally assume that the potential outcomes are not
affected by $Z_{it}$ if not through $\widetilde{W}_{it}$, i.e.
\begin{assumption}
\label{assu:(Exclusion)}\textbf{(Exclusion)} For all $\widetilde{w},\widetilde{w}^{\prime}\in\widetilde{W}_{i,t}$,
$t\geq1$, and $i\geq1$ it holds that
\[
\{\widetilde{Y}_{j,it}(\widetilde{w},z)=Y_{j,it}(\widetilde{w},z^{\prime})\}
\]
where $z,z^{\prime}\in Z_{i,t}$ are any possible combination of values
of $Z_{i,t}$.
\end{assumption}
Notice that one of the consequences of assumption \ref{assu:(Exclusion)}
is that it implies that the potential outcomes of $\widetilde{Y}_{j,it}$
- the residual of the outcome variable - do not depend on the realized
value of the instruments, if not by the means of the policy. Such
assumption, coupled with assumption \ref{assu:(Independence)}, allows
for a seamless transition from conditional expected values and realized outcomes
to potential outcomes frameworks. In the empirical example, this would
mean that the assignment of military expenditures at the federal-level
are independent with respect to the potential outcome process of any
GDP growth innovations and with respect to any military spending innovations
at the state-level.

Finally, monotonicity as in \citet{angrist1996identification} is
required, so that
\begin{assumption}
\label{assu:(Monotonicity)-For-all}\textbf{(Monotonicity)} For all
$z,z^{\prime}\in Z_{i,t}$, and all $t\geq1$ and $i\geq1$, it holds
that either $\widetilde{W}_{i,t}(z)\geq\widetilde{W}_{i,t}(z^{\prime})$
or $\widetilde{W}_{i,t}(z)\leq\widetilde{W}_{i,t}(z^{\prime})$.
\end{assumption}
Notice that this type of monotonicity assumption needs to hold for
every couple of instrument assignments $z,z^{\prime}$. For example,
in the case of fiscal multipliers, it means that if there is an increase
in total national military spending, the residuals of state-level
military spending are either increasing or decreasing for each unit
$i$ at each time $t$. Such assumption, differently from a non parametric
case, does not allow any discountinuity of the mapping of $\widetilde{W}$
on $Z$.
\begin{rem}
\textbf{(Weakness of a fully parametric monotonicity assumption).}
Let us say that $Z$ is multivalued, such that it can only take three integer values $Z=\{z^{0},z^{1},z^{2}\}$.
A parametric estimator needs to assume $W(z^{2})\geq W(z^{1})\geq W(z^{0})$.
A non parametric estimator could potentially solve this issue by estimating
two separate quantities, one for $z^{1}$ and $z^{2}$, and assuming
that $W(z^{1})\geq W(z^{0})$, $W(z^{2})\geq W(z^{0})$, but the ordering
of $W(z^{2})\underline{?}W(z^{1})$ does not need to be assumed. This
comes at the cost of requiring the data to be dense enough around
the quantities.
\end{rem}
Finally, the instrument needs to be a predictor of the instrumented
variable, a condition frequently defined as \emph{instrument relevance
}or \emph{instrument strength.}
\begin{assumption}
\textbf{\label{assu:(Relevance).}(Relevance).} The instrument satisfies
$\mathbb{E}[\widetilde{W}_{i,t},Z_{i,t}]\neq0$.
\end{assumption}
\begin{rem}
Frequently, SVAR-IV are thought to estimate a causal effect only under
a relevance and an non correlation condition. In macro economics,
the non correlation condition is frequently stated as $\mathbb{E}[\widetilde{Y}_{j,it},\widetilde{W}_{i,t}]=0$\footnote{See, among many others, \citet{StockWatson2018,Olea2021,bruns2024testing,brignone2023robust}.}.
Yet, such imposition would only refer to a statistical relationship
among the two, and would ignore the benefits of having a potential
outcome representation\footnote{In this sense, the condition usually stated is potentially testable
(see \citet{bruns2024testing}), but does not allow to claim the identification
of a LATE.}. In practice, this means that SVAR-IV would not be able to claim
that a meaningful causal estimand has been identified unless the econometrician identifies a set of assumptions that maps the estimator to an estimand. Typically, the two conditions that are sought for are insufficient to achieve any causal identification.
\end{rem}

\section{Estimation \label{sec:Estimation}}

Consider several known outcome variables and one intervention variable
aggregated in a process of the kind
\[
x_{i,t}=(W_{i,t}^{\prime},Y_{j=1,i,t}^{\prime},Y_{j=2,i,t}^{\prime},..,Y_{j=J,i,t}^{\prime})'.
\]
Here $W_{i,t}$ could indicate military procurement spending in region
$i$ at time $t$, and $j=1,..,J$ could indicate output and other
outcomes of interest in region $i$ at time $t$. 
PVARs are generally represented as processes that depend on their
past, a series of unit-specific characteristics, and some random disturbances,
which leads to:
\begin{equation}
\begin{aligned}x_{i,t}=(I_{m}-\Phi)\mu_{i}+\Phi x_{i,t-1}+\tilde{x}_{i,t}\qquad & i=1,..,N\quad t=1,..,T\end{aligned}
\label{eq:main_PVAR}
\end{equation}
Here $\Phi$ denotes an $m\times m$ matrix of slope coefficients\footnote{This framework could be extended to random effects instead of fixed
effects. However, such change would have no impact on the nature of
the causal effects estimated.}, $\mu_{i}$ is an $m\times1$ vector of individual-specific effects,
$\tilde{x}_{i,t}$ is an $m\times1$ vector of disturbances, and $I_{m}$
denotes the identity matrix of dimension $m\times m$. The model can
be extended to include higher lags, but to keep the notation compact
I will use a one lag representation.

The focus of the following section will be on the disturbances
\[
\tilde{x}_{i,t}=(\widetilde{W}_{i,t}^{\prime},\widetilde{Y}_{j=1,i,t}^{\prime},\widetilde{Y}_{j=2,i,t}^{\prime},..,\widetilde{Y}_{j=J,i,t}^{\prime})',
\]
 where the tilde represents the specific disturbance related to the
original variable. Such disturbances are distributed according to
$\widetilde{x}_{i,t}\sim\mathcal{N}(\boldsymbol{0}_{J+1},\Sigma)$
where $\mathcal{N}$ is a normal distribution. In the rest of the
paper will assume that the policy variable goes first\footnote{This is a standard assumption in the SVAR-IV literature (see \citet{StockWatson2018})
.}, and the outcome variables follow. For example, in the case of fiscal
multipliers this would mean that $\widetilde{W}_{i,t}$ refers to
the innovations of military spending growth in each region of the
US, and $\widetilde{Y}_{1,i,t}$ refers to the innovations of GDP
growth in those same regions. Exactly like vector autoregressions,
panel vector autoregressions have a contemporaneous causal representation
that is commonly defined as panel structural vector autoregression
(PSVAR), as follows
\begin{equation}
\begin{aligned}Rx_{i,t}=R(I_{m}-\Phi)\mu_{i}+R\Phi x_{i,t-1}+R\tilde{x}_{i,t}\qquad & i=1,..,N\quad t=1,..,T\end{aligned}
\label{eq:main_PVAR-structural}
\end{equation}

Generally, the estimation of the contemporaneous causal effects is
carried on transformations of $\Sigma=R^{-1}R^{-1\prime}$, where
$R^{-1}$ is the unique lower-triangular Cholesky factor with non-negative
diagonal elements. The reduced-form innovations $\widetilde{x}_{it}$
are related to the SPVAR shocks $\eta_{t}$ by an invertible matrix
$H$:
\[
\widetilde{x}_{i,t}=H\Gamma\eta_{i,t}=R^{-1}\eta_{i,t},\qquad\eta_{i,t}\sim(0,I_{J+1}),\qquad diag(H)=1,
\]
where $R^{-1}=H\Gamma$, and $\Gamma$ is a diagonal matrix with variance
of the shocks in the diagonal entries. The structural shocks $\eta_{i,t}$
are mean zero with unit variance, serially and mutually uncorrelated.
Since the autoregressive parameters $\hat{\Phi}$ can be consistently
estimated under regularity conditions, the sample residuals $\hat{\widetilde{x}}_{i,t}$
are consistent estimates of $\widetilde{x}_{i,t}$. The empirical
SPVAR problem reduces to finding $R$ from $\hat{\Phi}$. But there
are $(J+1)^{2}$ parameters in $R$ and the sample covariance of $\hat{\widetilde{x}}_{i,t}$
only provides $(J+1)((J+1)+1)$ conditions in face of $(J+1)^{2}$
parameters to be estimated. The SPVAR is therefore under-identified
as there can be infinitely many solutions that satisfy the covariance
restrictions.

The IV procedure for the estimation of the structural matrix in SVARs
generally corresponds to either one or two separate steps, depending
on whether the economist is interested in a unit normalized shock
or a standard shock\footnote{See \citet{gertler2015monetary,Mertens2013,Mertens2014,StockWatson2018,Olea2021}
for a full description of the procedure.}. First, an IV is estimated with the following first and second stages
\[
\begin{aligned}\widetilde{W}_{i,t}=\delta Z_{i,t}+\eta_{i,t} & \text{ first stage}\\
\widetilde{Y}_{j,i,t}=\beta_{j}\widetilde{W}_{i,t}+\epsilon_{i,t} & \text{ second stage(s)}
\end{aligned}
\]

Then, if the economist is interested in a unit normalized shock, the
IV estimator simply becomes $\beta_{j}^{IV}=(\rho_{j}/\delta)$, where
$\rho_{j}$ comes from the regression $\widetilde{Y}_{j,i,t}=\rho_{j}Z_{i,t}+\nu_{i,t}$,
$R_{1,1}=\delta$ ,and $R_{1,j}=\beta_{j}^{IV}$. In the case in which
the economist is interested in a standardized shock, instead, the
IV estimator becomes $\beta_{j}^{IV}=c_{j}(\rho_{j}/\delta)$.\footnote{Following \citet{gertler2015monetary}, consider the partition of
the covariance matrix of the residuals $R=[R_{1}R_{2}]=\left[\begin{array}{cc}
r_{11} & r_{12}\\
r_{21} & r_{22}
\end{array}\right]$, $Q=\frac{r_{21}}{r_{11}}\Sigma_{11}\frac{r_{21}^{\prime}}{r_{11}}-(\Sigma_{21}\frac{r_{21}^{\prime}}{r_{11}}+\frac{r_{21}}{r_{11}}\Sigma_{21}^{\prime})+\Sigma_{22}$,
and $r_{12}r_{12}^{\prime}=(\Sigma_{21}-\frac{r_{21}}{r_{11}}\Sigma_{11})^{\prime}Q^{-1}(\Sigma_{21}-\frac{r_{21}}{r_{11}}\Sigma_{11}).$
Then, it follows that, to obtain the structural form, $c_{j}=\sqrt{r_{12}}$
and $R_{1,1}=\delta\cdot c_{1}$ and $R_{1,k}=\rho_{k}c_{k}$.} In this case, the estimator is plugged in the covariance matrix as
an affine transformation that depends on a normalization that allows
to move to the reduced form and fully identifies the first column,
so that $R_{1,1}=\delta c_{1}$ and $R_{1,j}=\beta_{j}c_{j}$. More
simply, in the first case the estimator reduces to $\beta_{j}^{IV}=(\rho_{j}/\delta)$;
in the second case the normalization provided by the vector $c_{j}$
returns the modified estimator $\beta_{j}^{IV}=\frac{1}{\sqrt{\sigma_{\widetilde{W}_{it}}}}(\rho_{j}/\delta)$.
I will focus on the first case as it provides an interpretation that
is akin to the one of a standard IV estimator.

Finally, the following normalizing assumptions will be considered
to hold across the rest of the paper.
\begin{assumption}
\label{assu:(Normalising-assumptions):(1)-}\textbf{(Normalizing assumptions)}:\\
(1) $\widetilde{x}_{it}$ and $Z_{it}$ are stationary,\\
(2) $\widetilde{x}_{it}\sim\mathcal{N}(\boldsymbol{0}_{J+1},\Sigma)$
and $Z_{it}\sim\mathcal{N}(\mu_{Z},\sigma_{Z})$.
\end{assumption}
Notice that assumption \ref{assu:(Normalising-assumptions):(1)-}
is required for several different reasons. Part (1) allows to consider
the estimator of first stage and second stage(s) without violations
of the Wold theorem and mean that $R^{-1}$ is invertible and part (2) allows the derivative interpretation
of $\delta$ and $\rho_{j}$.
\begin{rem}
\textbf{(Why normality?)} While normality is not a necessary condition,
it has some important properties that are convenient when discussing
the estimators. In fact, alternative estimators that make different
assumptions about the distribution of $Z_{it}$ and $\widetilde{x}_{it}$
may be considered. For example, in the case in which $Z_{it}$ and
$\widetilde{W}_{it}$ are treatment dummy indicators, the theory goes
back to the traditional case of \citet{angrist1996identification},
and in the case in which $Z_{it}$ and $\widetilde{W}_{it}$ are multi-valued,
the theory goes back to the cases analyzed by \citet{vytlacil2002independence,heckman2001policy}.
\end{rem}
Under the assumptions laid out in section \ref{sec:Assumptions} it
can be shown that
\begin{thm}
\label{thm:PVAR-LATE}\textbf{(PSVAR-IV estimates a ratio of derivatives)}.
Under assumptions \ref{assu:(Independence)};\ref{assu:(Exclusion)};\ref{assu:(Monotonicity)-For-all};
\ref{assu:(Normalising-assumptions):(1)-};$\beta^{IV}$ estimates
\[
\beta_{j}^{IV}=\frac{\delta\mathbb{E}[\widetilde{Y}_{j}(z^{\circ})]/\delta z^{\circ}}{\delta\mathbb{E}[\widetilde{W}(z^{\circ})]/\delta z^{\circ}}.
\]
\end{thm}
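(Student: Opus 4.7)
The plan is to express $\beta_j^{IV} = \rho_j/\delta$ through OLS formulas and then use Assumptions \ref{assu:(Independence)}, \ref{assu:(Exclusion)}, and \ref{assu:(Normalising-assumptions):(1)-} to identify $\delta$ and $\rho_j$ as derivatives of expected potential outcomes with respect to the instrument. First, I would write the population first-stage and reduced-form slopes as $\delta = \operatorname{cov}(\widetilde{W}_{i,t}, Z_{i,t})/\operatorname{var}(Z_{i,t})$ and $\rho_j = \operatorname{cov}(\widetilde{Y}_{j,i,t}, Z_{i,t})/\operatorname{var}(Z_{i,t})$, so that $\beta_j^{IV}$ reduces to a ratio of two covariances. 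These objects are well defined and constant across $(i,t)$ by the stationarity part of Assumption \ref{assu:(Normalising-assumptions):(1)-}.

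Second, I would deploy Assumption \ref{assu:(Independence)} to rewrite the conditional mean of $\widetilde{W}_{i,t}$ given $Z_{i,t}=z^{\circ}$ as $\mathbb{E}[\widetilde{W}_{i,t}(z^{\circ})]$, and combine Assumptions \ref{assu:(Independence)} and \ref{assu:(Exclusion)} to rewrite the conditional mean of $\widetilde{Y}_{j,i,t}$ given $Z_{i,t}=z^{\circ}$ as $\mathbb{E}[\widetilde{Y}_{j,i,t}(z^{\circ})]$; exclusion shuts off the direct channel from $Z$ to the outcome, so the only residual dependence runs through $\widetilde{W}(z^{\circ})$, and independence lets us equate the observed conditional expectation with the expectation of the potential outcome.

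Third, I would invoke the normality part of Assumption \ref{assu:(Normalising-assumptions):(1)-} to argue that the joint vector $(\widetilde{x}_{i,t}, Z_{i,t})$ is Gaussian, so the two conditional expectations are affine in $z^{\circ}$:
\[
\mathbb{E}[\widetilde{W}_{i,t}\mid Z_{i,t}=z^{\circ}] = a + \delta\, z^{\circ}, \qquad \mathbb{E}[\widetilde{Y}_{j,i,t}\mid Z_{i,t}=z^{\circ}] = b_j + \rho_j\, z^{\circ},
\]
with slopes given precisely by the OLS coefficients of step one. Differentiating in $z^{\circ}$ and taking the ratio returns $\beta_j^{IV} = \rho_j/\delta = \bigl(\partial\mathbb{E}[\widetilde{Y}_j(z^{\circ})]/\partial z^{\circ}\bigr)/\bigl(\partial\mathbb{E}[\widetilde{W}(z^{\circ})]/\partial z^{\circ}\bigr)$.

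The main obstacle will be the joint-normality step: Assumption \ref{assu:(Normalising-assumptions):(1)-} only states marginal Gaussianity of $\widetilde{x}_{i,t}$ and of $Z_{i,t}$, so some argument is needed to upgrade this to joint Gaussianity of $(\widetilde{W}_{i,t}, Z_{i,t})$ and $(\widetilde{Y}_{j,i,t}, Z_{i,t})$. I would handle this by observing that the linear structural form of the PVAR, together with Assumptions \ref{assu:(Independence)}--\ref{assu:(Exclusion)}, forces the reduced-form residuals and $Z_{i,t}$ to enter a Gaussian linear system, so that the pair is jointly normal; as a fallback, the pointwise derivative statement survives under the weaker requirement that the conditional expectation is linear (or locally linear) in $z^{\circ}$, which is exactly the content of Assumption \ref{assu:(Monotonicity)-For-all} as applied in a fully parametric setting. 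Assumption \ref{assu:(Relevance).} is used implicitly to guarantee $\delta\neq 0$ so that the ratio is well defined; Assumption \ref{assu:(Monotonicity)-For-all} is not consumed in this derivative representation itself but will be what eventually licenses interpreting the ratio as a $\mu$-LATE rather than as a generic slope.
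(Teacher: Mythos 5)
Your proposal reaches the right conclusion but by a genuinely different route from the paper. The paper's proof (building on Lemma \ref{lemmaa1}) is a Yitzhaki-type argument: it writes $\rho_j=\operatorname{cov}(\widetilde{Y}_j,Z)/\operatorname{var}(Z)$ as $\int q(z^{\circ})g'(z^{\circ})\,dz^{\circ}$ with $g(z^{\circ})=\mathbb{E}[\widetilde{Y}_j\mid Z=z^{\circ}]$ via integration by parts, and then uses the marginal normality of $Z$ alone (a Stein's-lemma computation) to collapse the weights $q$ into the normal density, so that the OLS slope becomes an average derivative $\mathbb{E}[g'(Z)]$; Assumption \ref{assu:(Independence)} then converts the conditional expectation into the expectation of the potential outcome. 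Your route instead posits joint normality of $(\widetilde{x}_{i,t},Z_{i,t})$, which makes the conditional expectation functions affine so that the slope is trivially the (constant) derivative. What each buys: the paper's decomposition needs only the marginal law of $Z$ and yields a weighted-average-of-derivatives representation that is meaningful even when $g$ is nonlinear; your version is shorter and makes the linearity requirement explicit, but demands a strictly stronger distributional input. Notably, both arguments ultimately need $g'$ to be constant to state a \emph{pointwise} derivative --- the paper pulls $g'(z^{\circ})$ outside the integral with only a terse justification --- so the paper does not actually escape the issue you flag.

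One concrete problem with your patch: the fallback claim that Assumption \ref{assu:(Monotonicity)-For-all} supplies (local) linearity of $z^{\circ}\mapsto\mathbb{E}[\widetilde{W}(z^{\circ})]$ is wrong. Monotonicity only orders $\widetilde{W}_{i,t}(z)$ across instrument values; any increasing nonlinear map satisfies it, and it says nothing about the outcome equation at all. If you want linear conditional expectations you must either assume joint normality outright (your first suggestion, which is defensible given the linear reduced form but is genuinely stronger than Assumption \ref{assu:(Normalising-assumptions):(1)-} as written) or settle for the average-derivative statement that the integration-by-parts route delivers. Your observations that relevance is implicitly needed for the ratio to be well defined, and that monotonicity is consumed only later in the $\mu$-LATE interpretation of Theorem \ref{thm:(Interpretation-of-the-IRF)}, both match the paper.
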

Then, according to theorem \ref{thm:PVAR-LATE}, $\beta_{j}^{IV}$
simply captures the ratio of the effect of moving along different
values of $z^{\circ}$ on $\widetilde{Y}_{j}$ and on $\widetilde{W}$.
Therefore, a useful property of impulse response function can be established
according to the following theorem.
\begin{thm}
\label{thm:(Interpretation-of-the-IRF)}\textbf{(Interpretation of
the impulse response functions)}. The immediate impulse response function
of a shock in $\widetilde{W}$ captures
\[
\hat{\text{IRF}}_{j}=\mu-\text{LATE}_{j}=\mathbb{E}[\widetilde{Y}_{j}(\widetilde{w})-\widetilde{Y}_{j}(\widetilde{w}^{\prime})|\widetilde{W}(z)=\widetilde{w}]
\]
 representing the difference between the shock being equal to $\widetilde{w}$
and $\widetilde{w}^{\prime}$.
\end{thm}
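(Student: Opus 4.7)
The plan is to derive the identity by first writing the impact impulse response in terms of $\beta_j^{IV}$, substituting the derivative-ratio representation supplied by Theorem \ref{thm:PVAR-LATE}, and then converting that ratio into the $\mu$-LATE form by combining exclusion, independence and monotonicity with the linearity forced by normality.

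More concretely, I would start from the structural decomposition in Section \ref{sec:Estimation}. With the policy ordered first and a unit-normalised shock, the on-impact response of $\widetilde{Y}_j$ to a shock that changes $\widetilde{W}$ from $\widetilde{w}'$ to $\widetilde{w}$ is $\beta_j^{IV}(\widetilde{w}-\widetilde{w}')$, because $R_{1,j}/R_{1,1}=\beta_j^{IV}$. Substituting Theorem \ref{thm:PVAR-LATE} gives
\[
\hat{\text{IRF}}_j \;=\; \frac{\partial \mathbb{E}[\widetilde{Y}_j(z^{\circ})]/\partial z^{\circ}}{\partial \mathbb{E}[\widetilde{W}(z^{\circ})]/\partial z^{\circ}}\,(\widetilde{w}-\widetilde{w}').
\]
Under Assumption \ref{assu:(Normalising-assumptions):(1)-}, the joint normality of $(\widetilde{x}_{i,t},Z_{i,t})$ makes $\mathbb{E}[\widetilde{W}(z^{\circ})]$ and $\mathbb{E}[\widetilde{Y}_j(z^{\circ})]$ affine in $z^{\circ}$; the derivatives are then constant and the ratio equals the Wald-type object
\[
\frac{\mathbb{E}[\widetilde{Y}_j(z)]-\mathbb{E}[\widetilde{Y}_j(z')]}{\mathbb{E}[\widetilde{W}(z)]-\mathbb{E}[\widetilde{W}(z')]}
\]
for any pair $z\neq z'$. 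Choosing $z,z'$ so that $\mathbb{E}[\widetilde{W}(z)]-\mathbb{E}[\widetilde{W}(z')]=\widetilde{w}-\widetilde{w}'$ absorbs the scaling. Exclusion (Assumption \ref{assu:(Exclusion)}) then rewrites $\widetilde{Y}_j(z)$ as $\widetilde{Y}_j(\widetilde{W}(z))$, and independence and monotonicity (Assumptions \ref{assu:(Independence)} and \ref{assu:(Monotonicity)-For-all}) permit an Imbens--Angrist-style argument that collapses the numerator into the expectation of the structural outcome contrast taken over the compliant subpopulation, yielding $\mathbb{E}[\widetilde{Y}_j(\widetilde{w})-\widetilde{Y}_j(\widetilde{w}')\mid\widetilde{W}(z)=\widetilde{w}]$.

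The main obstacle will be this last step, in which a ratio of unconditional derivatives has to be reinterpreted as a conditional mean over compliers indexed by the event $\widetilde{W}(z)=\widetilde{w}$. In the classical binary-instrument LATE, monotonicity partitions the population into four strata and the Wald ratio exactly isolates the complier contrast. In the present continuous-continuous setting there is a continuum of compliance patterns, as the paper stresses in its discussion of \citet{antonelli2023principal}, so only the strong parametric monotonicity of Assumption \ref{assu:(Monotonicity)-For-all}, combined with the affine structure implied by normality, compresses that continuum into a single effective stratum labelled by $\widetilde{W}(z)=\widetilde{w}$. This is where the $\mu$-LATE definition earns its interpretation and where the linearity caveat flagged in the paper does substantive work; any relaxation of Assumption \ref{assu:(Normalising-assumptions):(1)-} would leave a residual heterogeneity term that cannot be absorbed into the conditional expectation and the clean identity in Theorem \ref{thm:(Interpretation-of-the-IRF)} would fail.
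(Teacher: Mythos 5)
Your proposal follows essentially the same route as the paper's proof: express the impact IRF as a multiple of $\beta_j^{IV}$, invoke Theorem \ref{thm:PVAR-LATE} for the derivative ratio, use the linearity implied by Assumption \ref{assu:(Normalising-assumptions):(1)-} to pass to the finite-difference Wald ratio $\mathbb{E}[\widetilde{Y}_j(z)-\widetilde{Y}_j(z')]/\mathbb{E}[\widetilde{W}(z)-\widetilde{W}(z')]$, and then apply the Imbens--Angrist stratification over the events $\widetilde{W}(z)=\widetilde{w}$ and $\widetilde{W}(z)=\widetilde{w}'$ to land on the complier conditional expectation. The paper carries out that last step exactly as you sketch it (writing $\mathbb{E}[\widetilde{Y}_j(z)]$ and $\mathbb{E}[\widetilde{Y}_j(z')]$ as two-point mixtures over the compliance strata and cancelling the probability $\mathbb{P}[\widetilde{W}(z)=\widetilde{w}]$ in the ratio), so your plan matches the intended argument.
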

Notice that theorem \ref{thm:(Interpretation-of-the-IRF)} implies
that, considering two different impulse response functions, such as
the difference between a $1\%$ and a $0\%$ shock, results in the
$\mu-\text{LATE}$ that captures the difference of GDP growth for
those units that complies with the national spending growth. Hence,
the impulse response captures $\mathbb{E}[\widetilde{Y}_{j}(1\%)|\widetilde{W}(z)=1\%]$,
the impact of a one percent deviation in regional military spending
growth on GDP growth for those units that observed a spending increase.

\section{Inference \label{sec:Inference}}

Instrumental variables are useful only as far as they satisfy the
relevance condition (assumption \ref{assu:(Relevance).}). In fact,
it is easy to see that, being the IV estimator
\[
\beta_{j}^{IV}=(\rho_{j}/\delta),
\]
weak identification could be tested by the means of a null hypothesis
$H_{0}:\delta=0$. Hence, for $\delta\rightarrow0$, it must be that
either $\beta_{j}^{IV}\rightarrow\infty$ or $\beta_{j}^{IV}\rightarrow-\infty$
depending on the sign of $\rho_{j}$. For some time the general consensus
has been to carry two different inferential procedures: one in the
first stage, by the means of the Cragg-Donald statistic, frequently
defined as first-stage F-statsistic (\citet{StockStaiger1997}); and
one, separately, in the second stage, by the means of standard confidence
intervals or bootstrap. Such procedure heavily relied on the idea
that standard confidence intervals possess asymptotically good coverage
properties under the alternative hypothesis ($H_{1}:\delta\neq0$).

However, such approach does not cover situations in which the instrument
is weak but satisfies independence. Indeed, an approach that generates
confidence intervals on $\beta_{j}^{IV}$ on the basis of the strength
of the instrument, even when the first stage coefficient is near zero,
may be preferred to one that may end up discarding interesting research
hypothesis on the basis of a weak - but independent - instrument.
On the basis of such wisdom, the two step approach may be sub optimal
compared to the Anderson-Rubin statistic approach (\citet{Anderson1949,Olea2021,AndrewsStockSun2019,stock2002testing,mikusheva2006tests,mikusheva2010robust}).
\citet{mikusheva2010robust} introduces a procedure for generating
confidence sets for the second stage that have good coverage properties
even in the null hypothesis case.\footnote{Hence, for $\delta=0$, the confidence set would be distributed between
minus and plus infinity.} \citet{Olea2021} extend the result by introducing a confidence set
for the Impulse Response function generated by a SVAR-IV by using
the AR statistic.

While it is known that the Anderson-Rubin confidence sets are optimal
in the case of one instrument, there is yet to form a consensus about
which approach may be preferred in the case of two or more instruments
(some recent advancements include the CLR test of \citet{andrews2006optimal}).

Appendix \ref{sec:Appendix-B:-Monte} extends the conventional wisdom
present in the SVAR-IV field to the PSVAR-IV case using rotations
of the Anderson-Rubin statistic and demonstrates the good coverage
properties of the AR statistic.

\section{Estimation of the dynamic fiscal multiplier \label{sec:Fiscal-multiplier}}

The fiscal multiplier is generally defined as the coefficient a regression 
with gdp as the dependent variable and government spending as the independent variable.  Such measure is of great
interest because of its policy relevance: a relatively large fiscal
multiplier is often times evoked by governments as the reason to increase
spending\footnote{For example, \citet{Nakamura2014} observes that the American Recovery
and Reinvestment Act (ARRA) was justified on the basis of large estimates
of the fiscal multiplier.}.

The aggregate fiscal multiplier is generally computed using vector autoregressions  (\citet{romer2010macroeconomic,Blanchard2002}) or local projections (\citet{Ramey2018}). Normally, the aggregate fiscal multiplier was found to be rarely above one. 

A local fiscal multiplier could be preferred to the aggregate fiscal multiplier for several reasons. First, the assumptions required to obtain an unbiased estimand are less restrictive than their aggregate counterpart. Indeed,  the computation of a national aggregate fiscal multiplier often poses some credibility issues due to the unreliability
of the underlying assumptions. The fiscal policy literature has therefore
explored the quantification of the impact of a fiscal expansion on
GDP by using more granular and localized data, either at the state
or regional-level. Second, the open economy multiplier can be potentially more interesting to central bankers because, by using state heterogeneity, it is essentially independent of monetary policy, as overnight rates are fixed for all states.

The recent emergence of this literature has generated
different relevant contributions that seem to indicate a regional
fiscal multiplier of about 1.5 (\citet{farhi2016fiscal,Nakamura2014,shoag2010impact,chodorow2019geographic}).
To the best of my knowledge, there is currently little work done in
estimating the dynamic \emph{regional} fiscal multiplier with the
same type of external instrument approach that has characterized the
aggregate data literature. Perhaps closer to the main idea of this
paper is \citet{dupor2023regional}, which estimates the dynamic regional
fiscal multiplier using a model to frame the impact of the ARRA. Yet,
the ARRA is representative of a particular context of the US economy
of low inflation and low interest rates, and may not be representative
of different state dependencies.

Motivated by the lack of empirical evidence at the intersection of
the two streams of literature, I estimate a regional fiscal multiplier
for the US using aggregate national military spending as an instrument
for the innovations of regional military spending. Two cautions are
however invited to the reader. First, the fiscal multiplier identified
using military spending data is particularly useful, but may not be
representative of a generic spending multiplier. It is useful as it
is inherently a measure of direct spending of the US government (\citet{Nakamura2014});
but it is not representative as it does not include \emph{all} the
government spending (\citet{koo2023impulseresponseinstrumentalvariables}.

In the case of panel vector autoregressions, the dynamic regional
fiscal multiplier can be estimated by simply by running a PVAR estimation
on the vector
\[
x_{it}=(\frac{\text{exp}_{it}-\text{exp}_{it-1}}{\text{gdp}_{it-1}}^{\prime},\frac{\text{gdp}_{it}-\text{gdp}_{it-1}}{\text{gdp}_{it-1}}^{\prime})^{\prime}.
\]
Here, the issue of endogeneity arises because the contemporaneous
innovations of fiscal expanses growth may not be thought as exogenous
with respect to the contemporaneous innovations of GDP growth. 

The leading assumption for the case of PVAR therefore is that the
United States do not embark on military buildups because states that
receive a disproportionate amount of military spending are doing\emph{
more poorly than before} relative to other states. To exploit this
assumption, I use data from the US extracted from the electronic database
of DD-350 military procurement forms available from the US Department
of Defense by \citet{Nakamura2014}, which includes military spending
for equipment of 10000\$ or more in the period 1966-1984 and above
100000\$ in the period until 2006.\footnote{Unfortunately, the data is not updated any further.}
The rest of the analysis follows \citet{Nakamura2014} fairly closely:
the data is at a yearly frequency, and region
refers to the aggregation of different states that are close and not
densely populated, resulting in 10 different macro-regions and 39
different years. Differently from the original paper, however, the
main estimation makes use of the variable's growth with respect to
the previous year, rather than the previous two years; and the Bartik/shift-share
instrument is given a preference over using 10 different instruments
(one for each state). The reason I made such choice is that conventionally
time series regressions are framed in terms of growth with respect
to the previous period, and a one dimensional instrument is known
to have an optimal confidence set, whereas the case of multiple instruments
may provide less reliable confidence sets.\footnote{In the appendix, I show that alternative formulations using growth
with respect to two previous periods may change the results slightly.
While the quantities tend to be similar in the impulse response, the
mechanism by which fiscal expansions tend to be associated with an
increase in output in the following year is by the means of a large
output autocorrelation, rather than a direct correlation between output
and past expanses.}

The model is estimated as follows. First, I use a one-lag model as
suggested by the MAIC, MBIC, MHIQ from table \ref{tab:MAIC,-BBIC,-MHIQ}.
\begin{table}
\begin{centering}
\begin{tabular}{cccc}
 &  &  & \tabularnewline
\hline
\hline
 & MBIC & MAIC & MQIC\tabularnewline
$p=1$ & -24.08 & 7.00 & -5.35\tabularnewline
$p=2$ & -23.02 & -7.48 & -13.66\tabularnewline
\hline
\end{tabular}
\par\end{centering}
\caption{MAIC, BBIC, MHIQ tests. \label{tab:MAIC,-BBIC,-MHIQ}}
\end{table}
 The residuals plotted in Figure \ref{fig:Regression-of-the-residuals-and-lags}.
\begin{figure}
\centering{}\includegraphics[scale=0.6]{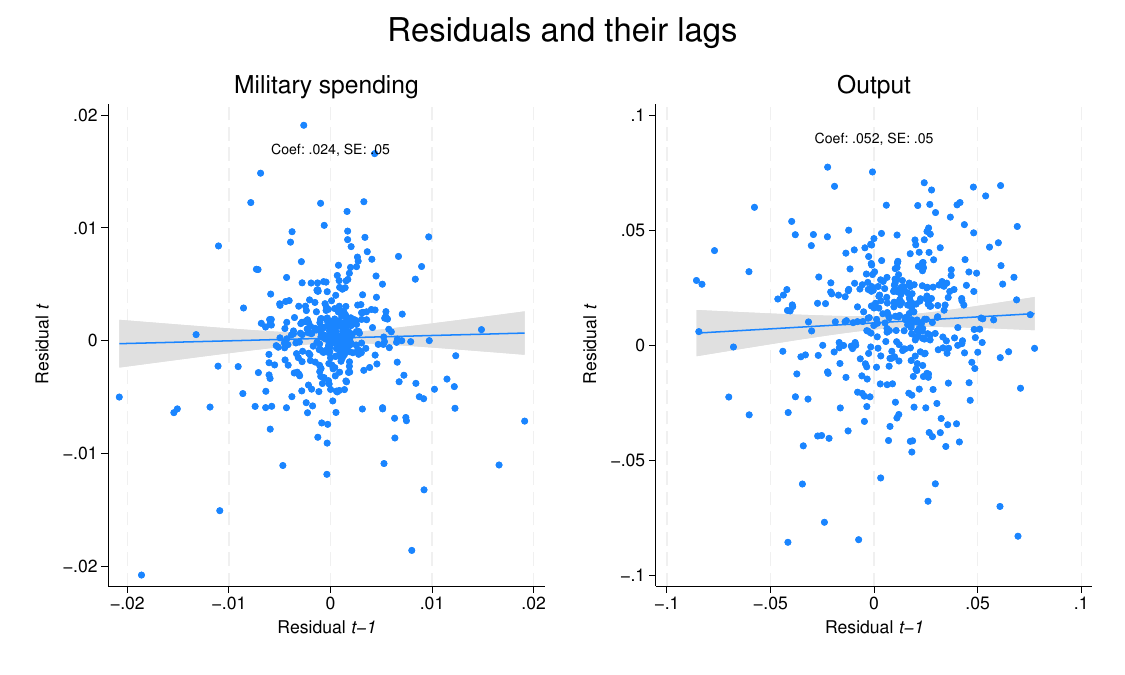}\caption{Regression of the residuals and their lags. \label{fig:Regression-of-the-residuals-and-lags}}
\end{figure}
From the figure, there appears to be no indication of residual autocorrelation.
This is confirmed by the regression coefficients obtained by regressing
the residuals against their lags. The coefficients, being not statistically
different from zero, do not seem to suggest to reject the null hypothesis
of a statistically significant relationship between the residuals
and their lags. Finally, I turn to the assumption of normality of
the model, discussed in assumption \ref{assu:(Normalising-assumptions):(1)-}.
In fact, violations of assumption \ref{assu:(Normalising-assumptions):(1)-}(iii)
would suggest that non-parametric estimators could be preferred over
the ones implied by the 2sls utilized for the IV regression because
of the assumption that are required from parametric estimators. The
histograms displaying the error's distribution in figure \ref{fig:Distribution-of-residuals}
seem to indicate that the residuals may be normally distributed and
therefore continuously differentiable with appropriate weights, indicating
the adequateness of the normality assumption.\footnote{Several statistical tests (such as \citet{shapiro1965analysis,shapiro1972approximate})
with the null hypothesis of normality fail to reject the null, indicating
that the residuals may indeed be normally distributed.}
\begin{figure}
\begin{centering}
\includegraphics[scale=0.6]{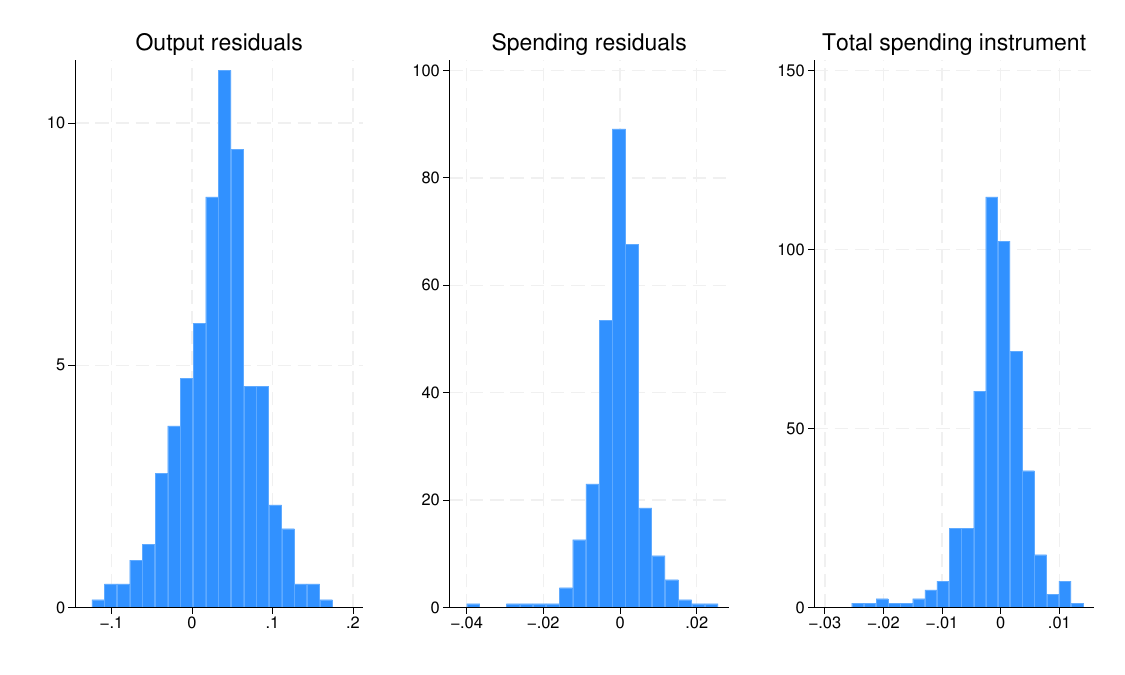}
\par\end{centering}
\caption{Distribution of the residuals and of the national military spending
instruments. \label{fig:Distribution-of-residuals}}
\end{figure}

The results from the IV regression are instead displayed in Table
\ref{tab:Table-IV}. The coefficient from the first stage is statistically
significantly different from zero, and the first stage F-statistic
is above the commonly advocated threshold level of 10 (see \citet{StockStaiger1997}).
Moreover, the AR statistic is above the $\chi_{1,1-\alpha}$ critical
value for $\alpha=.05$, suggesting that the results may be statistically
significantly different from zero.
\begin{table}
\begin{centering}
\begin{tabular}{cc|ccc}
\hline
\multicolumn{2}{|c|}{\textsc{first stage}} & \multicolumn{3}{c|}{\textsc{Second stage}}\tabularnewline
\hline
\hline
 & \textsl{State spending} &  &  & \textsl{GDP growth}\tabularnewline
\textsl{Total spending} & 0.46 &  & \textsl{State spending} & 1.74\tabularnewline
\textsl{Standard CI} & {[}0.37 0.54{]} &  & \textsl{Anderson-Rubin CS} & {[}0.53 3.05{]}\tabularnewline
\textsl{Fist stage F statistic} & 120.86 &  & \textsl{Anderson-Rubin statistic} & 92.63\tabularnewline
\hline
\hline
 & \multicolumn{1}{c}{} &  &  & \tabularnewline
\end{tabular}
\par\end{centering}
\caption{The first column of this table reports the result of the regression
of state spending innovations on total spending, with standard confidence
intervals and the first stage F-statistic. The second column of the
table instead reports the results of the 2sls of GDP growth innovations
on the instrumented state spending innovation, with the confidence
sets built using the Anderson-Rubin statistic and the Anderson-Rubin
statistic itself. In both cases, the confidence interval and the confidence
set are at the 95\% level.\label{tab:Table-IV}}
\end{table}

Finally, consider figure \ref{fig:Impulse-response-functions}, which
displays the impulse response functions of a 1\% shock in fiscal spending
growth. The results are similar to the ones of the literature, suggesting
a value of the fiscal multiplier of approximately $\sim1.7$ in the
first period\footnote{Notice that, by definition, the IRF on impact is the second stage
regression in table \ref{tab:Table-IV}.}. However, the dynamic fiscal multiplier displays an interesting feature,
as it appears that the impact of a change in the fiscal spending in
year $t$ results in a corresponding increase in output growth by
$\sim1.5$. To better highlight the mechanism by which such response
happens, table \ref{tab:Autoregressive-coefficients-of-PVAR-main}
displays the AR coefficients. The high correlation between GDP growth
and fiscal policy in the previous period is the main mechanism by
which the fiscal multiplier can result in a GDP growth that may last
for more than one year. Moreover, fiscal policy tends to not be particularly
correlated with past fiscal policy or output, resulting in a response
 close to zero in the second horizon.

\begin{table}
\begin{centering}
\begin{tabular}{|c|cc|}
\multicolumn{1}{c}{} & \multicolumn{2}{c}{}\tabularnewline
\cline{2-3} \cline{3-3}
\multicolumn{1}{c|}{} & \multicolumn{1}{c|}{$\widetilde{\frac{\text{exp}_{it-1}-\text{exp}_{it-2}}{\text{gdp}_{it-2}}}$} & $\widetilde{\frac{\text{gdp}_{it-1}-\text{gdp}_{it-2}}{\text{gdp}_{it-2}}}$\tabularnewline
\hline
$\widetilde{\frac{\text{exp}_{it}-\text{exp}_{it-1}}{\text{gdp}_{it-1}}}$ & $-0.10$ & $-0.03^{\ast\ast\ast}$\tabularnewline
\multicolumn{1}{|c}{} & $[-0.28,0.06]$ & $[-0.05,-0.01]$\tabularnewline
\hline
$\widetilde{\frac{\text{gdp}_{it}-\text{gdp}_{it-1}}{\text{gdp}_{it-1}}}$ & $0.80^{\ast\ast\ast}$ & $0.34^{\ast\ast\ast}$\tabularnewline
\multicolumn{1}{|c}{} & $[0.20,1.41]$ & $[0.21,0.46]$\tabularnewline
\hline
\end{tabular}
\par\end{centering}
\caption{Autoregressive coefficients of the PVAR. The confidence interval represent
are set at the $95\%$ level.\label{tab:Autoregressive-coefficients-of-PVAR-main} }
\end{table}

\begin{figure}
\begin{centering}
\includegraphics[scale=0.6]{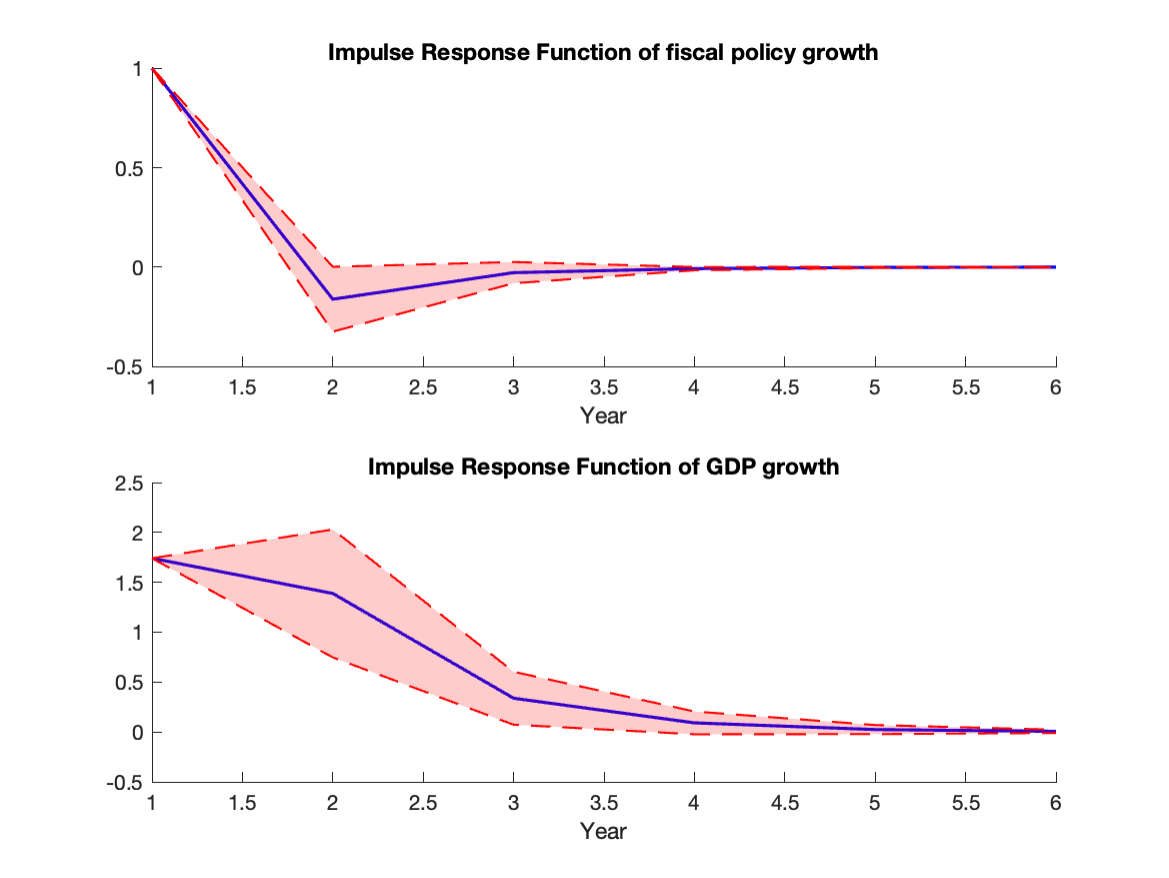}
\par\end{centering}
\caption{\label{fig:Impulse-response-functions}Impulse response functions
of regional military spending and regional GDP growth to a 1\% shock
in regional military spending. The confidence sets are built using
the Anderson-Rubin test statistic developed in section \ref{sec:Inference}
at the 95\% level.}
\end{figure}

\section{Conclusions\label{sec:Conclusions}}

This paper discussed the causal interpretation of panel vector autoregressions
identified by the means of external instruments. The IRF generated
by a PVAR can estimate a LATE representing the difference between
the outcome variable under a treatment and no treatment status for
the compilers. However, such LATE needs to be read differently from
the panel linear regression literature, as it refers to the residuals
and emerges as a counterfactual assignment of different predictions,
such as a 1\% shock versus a 0\% shock. I have discussed under which
assumptions the LATE may be captured: independence, exclusion, and
monotonicity. Some drawbacks of the proposed identification scheme
include the severity of the parametric linear nature of the monotonicity
assumption.

Moreover, I discussed the best approaches to conduct inference in
a PVAR identified using external instrument. In appendix \ref{sec:Appendix-B:-Monte}
I showcase the good small sample properties of the AR confidence sets
calibrating a simulation on the basis of the dataset from the application.

Finally, I have applied these tools to the estimation of a dynamic
regional fiscal multiplier for the United States, a quantity that
has been rarely targeted by the literature. My empirical findings
suggest that the dynamic regional fiscal multiplier may be above one
in the second period, indicating some possibly longer term effects
of fiscal expansions on GDP growth.

Future researchers are invited to develop two points. First, the $\mu-\text{LATE}$
interpretation of the PSVAR-IV relies on an underlying linearity assumption.
Yet, non-parametric estimators, which could potentially alleviate
the linearity assumption, are never utilized in the SVAR nor the PSVAR
literature. If the data utilised is sufficiently large, such methods
could be further explored. Second, because the IV literature is still
uncertain about which statistics to use when dealing with multiple
instruments, the inference issue of overidentification naturally carry
to SVAR-IV and PSVAR-IV. Hence, future researchers should properly
discuss the unreliability of confidence sets in such cases and possibly
implement novel methodologies with better coverage properties.
\newpage
\bibliographystyle{apalike}
\bibliography{bibliography}

\begin{thebibliography}{}

\bibitem[Anderson and Rubin, 1949]{Anderson1949}
Anderson, T.~W. and Rubin, H. (1949).
\newblock Estimation of the parameters of a single equation in a complete
  system of stochastic equations.
\newblock {\em The Annals of Mathematical Statistics}, 20(1):46 -- 63.

\bibitem[Andrews et~al., 2006]{andrews2006optimal}
Andrews, D.~W., Moreira, M.~J., and Stock, J.~H. (2006).
\newblock Optimal two-sided invariant similar tests for instrumental variables
  regression.
\newblock {\em Econometrica}, 74(3):715--752.

\bibitem[Andrews et~al., 2019]{AndrewsStockSun2019}
Andrews, I., Stock, J., and Sun, L. (2019).
\newblock Weak instruments in iv regression: Theory and practice.
\newblock {\em Annual Review of Economics}, 11:727--753.

\bibitem[Angrist et~al., 1996]{angrist1996identification}
Angrist, J.~D., Imbens, G.~W., and Rubin, D.~B. (1996).
\newblock Identification of causal effects using instrumental variables.
\newblock {\em Journal of the American statistical Association},
  91(434):444--455.

\bibitem[Antonelli et~al., 2023]{antonelli2023principal}
Antonelli, J., Mealli, F., Beck, B., and Mattei, A. (2023).
\newblock Principal stratification with continuous treatments and continuous
  post-treatment variables.
\newblock {\em arXiv preprint arXiv:2309.14486}.

\bibitem[Bartik, 1991]{bartik1991benefits}
Bartik, T.~J. (1991).
\newblock Who benefits from state and local economic development policies?

\bibitem[Blanchard and Perotti, 2002]{Blanchard2002}
Blanchard, O. and Perotti, R. (2002).
\newblock An empirical characterization of the dynamic effects of changes in
  government spending and taxes on output.
\newblock {\em The Quarterly Journal of Economics}, 117(4):1329--1368.

\bibitem[Bojinov and Shephard, 2019]{bojinovshephard2019}
Bojinov, I. and Shephard, N. (2019).
\newblock Time series experiments and causal estimands: exact randomization
  tests and trading.
\newblock {\em Journal of the American Statistical Association}.

\bibitem[Brignone et~al., 2023]{brignone2023robust}
Brignone, D., Franconi, A., and Mazzali, M. (2023).
\newblock Robust impulse responses using external instruments: the role of
  information.
\newblock {\em arXiv preprint arXiv:2307.06145}.

\bibitem[Bruns and Keweloh, 2024]{bruns2024testing}
Bruns, M. and Keweloh, S.~A. (2024).
\newblock Testing for strong exogeneity in proxy-vars.
\newblock {\em Journal of Econometrics}, 245(1-2):105876.

\bibitem[Chodorow-Reich, 2019]{chodorow2019geographic}
Chodorow-Reich, G. (2019).
\newblock Geographic cross-sectional fiscal spending multipliers: What have we
  learned?
\newblock {\em American Economic Journal: Economic Policy}, 11(2):1--34.

\bibitem[Dupor et~al., 2023]{dupor2023regional}
Dupor, B., Karabarbounis, M., Kudlyak, M., and Saif~Mehkari, M. (2023).
\newblock Regional consumption responses and the aggregate fiscal multiplier.
\newblock {\em Review of Economic Studies}, 90(6):2982--3021.

\bibitem[Farhi and Werning, 2016]{farhi2016fiscal}
Farhi, E. and Werning, I. (2016).
\newblock Fiscal multipliers: Liquidity traps and currency unions.
\newblock In {\em Handbook of macroeconomics}, volume~2, pages 2417--2492.
  Elsevier.

\bibitem[Fieller, 1944]{Fieller1944}
Fieller (1944).
\newblock A fundamental formula in the statistics of biological assay, and some
  applications.
\newblock {\em Quarterly Journal of Pharmacy and Pharmacology}, 17:117--123.

\bibitem[Gertler and Karadi, 2015]{gertler2015monetary}
Gertler, M. and Karadi, P. (2015).
\newblock Monetary policy surprises, credit costs, and economic activity.
\newblock {\em American Economic Journal: Macroeconomics}, 7(1):44--76.

\bibitem[Goldsmith-Pinkham et~al., 2020]{goldsmith2020bartik}
Goldsmith-Pinkham, P., Sorkin, I., and Swift, H. (2020).
\newblock Bartik instruments: What, when, why, and how.
\newblock {\em American Economic Review}, 110(8):2586--2624.

\bibitem[Granger and Newbold, 2014]{granger2014forecasting}
Granger, C. W.~J. and Newbold, P. (2014).
\newblock {\em Forecasting economic time series}.
\newblock Academic press.

\bibitem[Heckman and Vytlacil, 2001]{heckman2001policy}
Heckman, J.~J. and Vytlacil, E. (2001).
\newblock Policy-relevant treatment effects.
\newblock {\em American Economic Review}, 91(2):107--111.

\bibitem[Koles{\'a}r and Plagborg-M{\o}ller, 2024]{kolesar2024dynamic}
Koles{\'a}r, M. and Plagborg-M{\o}ller, M. (2024).
\newblock Dynamic causal effects in a nonlinear world: the good, the bad, and
  the ugly.
\newblock {\em arXiv preprint arXiv:2411.10415}.

\bibitem[Koo et~al., 2023]{koo2023impulseresponseinstrumentalvariables}
Koo, B., Lee, S., and Seo, M.~H. (2023).
\newblock What impulse response do instrumental variables identify?

\bibitem[Menchetti and Bojinov, 2022]{menchettibojinov2022}
Menchetti, F. and Bojinov, I. (2022).
\newblock {Estimating the effectiveness of permanent price reductions for
  competing products using multivariate Bayesian structural time series
  models}.
\newblock {\em The Annals of Applied Statistics}, 16(1):414 -- 435.

\bibitem[Menchetti et~al., 2022]{menchetticipollinimealli2022}
Menchetti, F., Cipollini, F., and Mealli, F. (2022).
\newblock {Combining counterfactual outcomes and ARIMA models for policy
  evaluation}.
\newblock {\em The Econometrics Journal}, 26(1):1--24.

\bibitem[Mertens and Ravn, 2013]{Mertens2013}
Mertens, K. and Ravn, M. (2013).
\newblock The dynamic effects of personal and corporate income tax changes in
  the united states.
\newblock {\em American Economic Review}, 103(4):1212--1247.

\bibitem[Mertens and Ravn, 2014]{Mertens2014}
Mertens, K. and Ravn, M. (2014).
\newblock A reconciliation of svar and narrative estimates of tax multipliers.
\newblock {\em Journal of Monetary Economics}, 68:S1--S19.

\bibitem[Mikusheva, 2010]{mikusheva2010robust}
Mikusheva, A. (2010).
\newblock Robust confidence sets in the presence of weak instruments.
\newblock {\em Journal of Econometrics}, 157(2):236--247.

\bibitem[Mikusheva and Poi, 2006]{mikusheva2006tests}
Mikusheva, A. and Poi, B.~P. (2006).
\newblock Tests and confidence sets with correct size when instruments are
  potentially weak.
\newblock {\em The Stata Journal}, 6(3):335--347.

\bibitem[Nakamura and Steinsson, 2014]{Nakamura2014}
Nakamura, E. and Steinsson, J. (2014).
\newblock Fiscal stimulus in a monetary union: Evidence from us regions.
\newblock {\em American Economic Review}, 104(3):753--792.

\bibitem[Nakamura and Steinsson, 2018]{Nakamura2018a}
Nakamura, E. and Steinsson, J. (2018).
\newblock Identification in macroeconomics.
\newblock {\em Journal of Economic Perspectives}, 32(3):59--86.

\bibitem[Olea et~al., 2021]{Olea2021}
Olea, J. L.~M., Stock, J.~H., and Watson, M.~W. (2021).
\newblock Inference in structural vector autoregressions identified with an
  external instrument.
\newblock {\em Journal of Econometrics}, 225(1):74--87.

\bibitem[Pala, 2025]{pala2024pvarcontrol}
Pala, R. (2025).
\newblock The causal interpretation of panel vector autoregressions.
\newblock {\em arXiv preprint arXiv:2510.23540}.

\bibitem[Rambachan and Shephard, 2021]{RambachanSheppard2021}
Rambachan, A. and Shephard, N. (2021).
\newblock When do common time series estimands have nonparametric causal
  meaning?
\newblock {\em Working Paper}.

\bibitem[Ramey and Zubairy, 2018]{Ramey2018}
Ramey, V. and Zubairy, S. (2018).
\newblock Government spending multipliers in good times and in bad: Evidence
  from us historical data.
\newblock {\em Journal of Political Economy}, 126(2):850 -- 901.

\bibitem[Romer and Romer, 2010]{romer2010macroeconomic}
Romer, C.~D. and Romer, D.~H. (2010).
\newblock The macroeconomic effects of tax changes: estimates based on a new
  measure of fiscal shocks.
\newblock {\em American economic review}, 100(3):763--801.

\bibitem[Shapiro and Francia, 1972]{shapiro1972approximate}
Shapiro, S.~S. and Francia, R. (1972).
\newblock An approximate analysis of variance test for normality.
\newblock {\em Journal of the American statistical Association},
  67(337):215--216.

\bibitem[Shapiro and Wilk, 1965]{shapiro1965analysis}
Shapiro, S.~S. and Wilk, M.~B. (1965).
\newblock An analysis of variance test for normality (complete samples).
\newblock {\em Biometrika}, 52(3-4):591--611.

\bibitem[Shoag et~al., 2010]{shoag2010impact}
Shoag, D. et~al. (2010).
\newblock The impact of government spending shocks: Evidence on the multiplier
  from state pension plan returns.
\newblock {\em unpublished paper, Harvard University}.

\bibitem[Staiger and Stock, 1997]{StockStaiger1997}
Staiger, D. and Stock, J.~H. (1997).
\newblock Instrumental variables regression with weak instruments.
\newblock {\em Econometrica}, 65(3):557--586.

\bibitem[Stock and Watson, 2018]{StockWatson2018}
Stock, J.~H. and Watson, M.~W. (2018).
\newblock Identification and estimation of dynamic causal effects in
  macroeconomics using external instruments.
\newblock {\em The Economic Journal}, 128(610):917--948.

\bibitem[Stock and Yogo, 2002]{stock2002testing}
Stock, J.~H. and Yogo, M. (2002).
\newblock Testing for weak instruments in linear iv regression.

\bibitem[Vytlacil, 2002]{vytlacil2002independence}
Vytlacil, E. (2002).
\newblock Independence, monotonicity, and latent index models: An equivalence
  result.
\newblock {\em Econometrica}, 70(1):331--341.

\bibitem[Yitzhaki, 1996]{Yitazaki1996}
Yitzhaki, S. (1996).
\newblock On using linear regressions in welfare economics.
\newblock {\em Journal of Business \& Economic Statistics}, 14(4):478--486.

\end{thebibliography}


\newpage

\section{Appendix\label{sec:AppendixPVARIV}}

\subsection{Appendix A: Other empirical results\label{subsec:Data-and-other}}

I propose a robustness check based on alternative growth rate measures.
This is because \citet{Nakamura2014} prefer two years growth rate
to growth rates with respect to the previous year. Using growth rates
with respect to two years before I obtain the MBIC-MAIC-MHQ in table
\ref{tab:MAIC,-BBIC,-MHIQ-growth-2}. Different tests appear to suggest
different lag selections, but I will make use of a 2 lag specification
that minimizes the MBIC.

\begin{table}
\begin{centering}
\begin{tabular}{cccc}
 &  &  & \tabularnewline
\hline
\hline
 & MBIC & MAIC & MQIC\tabularnewline
$p=1$ & -24.62 & 36.16 & 11.91\tabularnewline
$p=2$ & -33.22 & 12.37 & -5.82\tabularnewline
$p=3$ & -29.53 & 0.87 & -11.26\tabularnewline
$p=4$ & -16.81 & -1.61 & -7.68\tabularnewline
\hline
\end{tabular}
\par\end{centering}
\caption{MAIC, BBIC, MHIQ tests. \label{tab:MAIC,-BBIC,-MHIQ-growth-2}}
\end{table}
 In this case, the results from the 2sls are reported in table \ref{tab:Table-IV-2}.
The table suggests a slightly larger fiscal multiplier on impact compared
to the one found in the main specification.
\begin{table}
\begin{centering}
\begin{tabular}{cc|ccc}
\hline
\multicolumn{2}{|c|}{\textsc{first stage}} & \multicolumn{3}{c|}{\textsc{Second stage}}\tabularnewline
\hline
\hline
 & \textsl{State spending} &  &  & \textsl{GDP growth}\tabularnewline
\textsl{Total spending} & 0.49 &  & \textsl{State spending} & 2.27\tabularnewline
\textsl{Standard CI} & {[}0.39 0.60{]} &  & \textsl{Anderson-Rubin CS} & {[}0.73 3.97{]}\tabularnewline
\textsl{Fist stage F statistic} & 89.50 &  & \textsl{Anderson-Rubin statistic} & 8.13\tabularnewline
\hline
\hline
 & \multicolumn{1}{c}{} &  &  & \tabularnewline
\end{tabular}\end{centering}
\caption{The first column of this table reports the result of the regression
of state spending innovations on total spending, with standard confidence
intervals and the first stage F-statistic. The second column of the
table instead reports the results of the 2sls of GDP growth innovations
on the instrumented state spending, with the confidence sets built
using the Anderson-Rubin statistic and the Anderson-Rubin statistic
itself. In both cases, the confidence interval and the confidence
set are at the 95\% level. In this case, the growth rates are computed
using the growth compared to 2 years before, and the PVAR is estimated
using 2 lags.\label{tab:Table-IV-2}}
\end{table}
 Finally, the impulse response is displayed in figure \ref{fig:Impulse-response-functions-2}.
From the figure, it appears as though the impact of a shock in fiscal
policy on output may be larger in the second period compared to the
main specification provided in the paper. This feature appears to
be almost entirely driven by a relatively large autocorrelation coefficient
of order 1 of local military spending. In fact, while in the main
body of the paper such AR is negative and close to zero, in this model
it is estimated to be about .6, as displayed in table \ref{tab:Autoregressive-coefficients-of-PVAR-growth-2}.
Such movement is however counteracted by a contraction predicted by
a negative coefficient in the second lag.
\begin{figure}
\begin{centering}
\includegraphics[width=8.3cm]{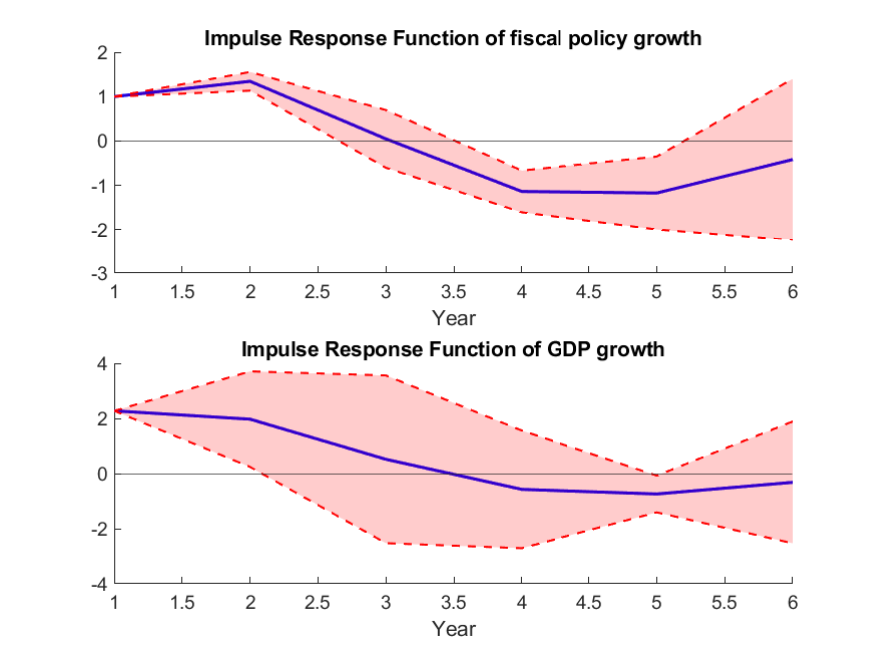}
\par\end{centering}
\caption{\label{fig:Impulse-response-functions-2}Impulse response functions
of regional military spending and regional GDP growth to a 1\% shock
in regional military spending. The confidence sets are built using
the Anderson-Rubin test statistic developed in section \ref{sec:Inference}
at the 95\% level.}
\end{figure}
\begin{table}
\begin{centering}
\begin{tabular}{|c|cc||cc|}
\multicolumn{1}{c}{} & \multicolumn{4}{c}{}\tabularnewline
\cline{2-5} \cline{3-5} \cline{4-5} \cline{5-5}
\multicolumn{1}{c|}{} & \multicolumn{2}{c||}{$t-1$} & \multicolumn{2}{c|}{$t-2$}\tabularnewline
\cline{2-5} \cline{3-5} \cline{4-5} \cline{5-5}
\multicolumn{1}{c|}{} & \multicolumn{1}{c|}{$\widetilde{\frac{\text{exp}_{it-1}-\text{exp}_{it-3}}{\text{gdp}_{it-3}}}$} & $\widetilde{\frac{\text{gdp}_{it-1}-\text{gdp}_{it-3}}{\text{gdp}_{it-3}}}$ & \multicolumn{1}{c|}{$\widetilde{\frac{\text{exp}_{it-2}-\text{exp}_{it-4}}{\text{gdp}_{it-4}}}$} & $\widetilde{\frac{\text{gdp}_{it-2}-\text{gdp}_{it-4}}{\text{gdp}_{it-4}}}$\tabularnewline
\hline
$\widetilde{\frac{\text{exp}_{it}-\text{exp}_{it-2}}{\text{gdp}_{it-2}}}$ & $0.63^{\ast\ast\ast}$ & $-0.03^{\ast\ast}$ & $-0.21^{\ast\ast\ast}$ & $-0.003$\tabularnewline
 & $[0.47,0.77]$ & $[-0.04,-0.012]$ & $[-0.35,-0.80]$ & $[-0.01,0.01]$\tabularnewline
\hline
$\widetilde{\frac{\text{gdp}_{it}-\text{gdp}_{it-2}}{\text{gdp}_{it-2}}}$ & $0.32$ & $0.88^{\ast\ast\ast}$ & $-0.53$ & $-0.52^{\ast\ast\ast}$\tabularnewline
 & $[-0.32,0.96]$ & $[0.77,0.98]$ & $[-1.17,0.11]$ & $[-0.42,-0.62]$\tabularnewline
\hline
\end{tabular}\end{centering}
\caption{Autoregressive coefficients of the PVAR estimated using growth levels
with respect to two years before and two lags. The confidence interval
are set at the $95\%$ level.\label{tab:Autoregressive-coefficients-of-PVAR-growth-2}}
\end{table}
 Overall, the interpretation of the impulse response function is slightly
different from the one in the main specification because the channel
by which gdp is supposed to increase is also largely driven by its
own autocorrelation. In the end, the impact still appears to be positive
and statistically significant for the contemporaneous impact and the
following period, and then is zero after. At the same time, the impact
of a fiscal policy spending shock implies a future decline of fiscal
policy. Therefore, on broader terms, this specification confirms a
positive and statistically significant impact of fiscal expansion
to GDP in the following year.

\clearpage{}

\subsection{Appendix B: Proofs regarding identification \label{subsec:Proofs-identification}}
\begin{lem}
\label{lemmaa1}Consider (without loss of generality) the case of
a system of the kind $x_{it}=(W_{it}^{\prime},Y_{1,it}^{\prime},..,Y_{J,it}^{\prime})^{\prime}$.
Let me define $\boldsymbol{x}$ as the matrix containing each $x_{it}$
and $\boldsymbol{x}_{2:J}$ as the partition that includes all the
outcome variables, which are ordered from the second to the last position.
Moreover, $\boldsymbol{\omega}$ will be defined as the matrix containing
all the lags, i.e. $\omega_{it}=(W_{it-1}^{\prime},Y_{j,it-1}^{\prime},..,Y_{J,it-1}^{\prime})^{\prime}.$
The residuals of the VAR are $\boldsymbol{\widetilde{x}}=(\boldsymbol{\omega}^{\prime}\boldsymbol{\omega})^{-1}\boldsymbol{\omega}^{\prime}\boldsymbol{x}$.
Then, it is possible to consider the partition $\boldsymbol{\widetilde{x}_{2:J}}$
as the one containing the residuals of the outcome variables, and
$\boldsymbol{\widetilde{x}_{1}}$ as the one containing the residuals
of the first column. Then, the 2sls estimator becomes
\[
\beta_{j}^{IV}=((Z^{\prime}Z)^{-1}Z^{\prime}\boldsymbol{\widetilde{x}_{2:J}})/((Z^{\prime}Z)^{-1}Z^{\prime}\boldsymbol{\widetilde{x}_{1}}).
\]

Here, in the case in which $Z$ is continuously distributed, $\beta_{j}^{IV}=\frac{\delta\mathbb{E}[\widetilde{Y}_{j,it}|Z_{it}=z^{\circ}]/\delta z^{\circ}}{\delta\mathbb{E}[W_{it}|Z_{it}=z^{\circ}]/\delta z^{\circ}}$.
\end{lem}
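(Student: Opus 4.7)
The strategy is to reduce the 2SLS estimator to a ratio of simple OLS slopes in the instrument by a Frisch--Waugh--Lovell argument, and then, under the continuity/normality in Assumption \ref{assu:(Normalising-assumptions):(1)-}, to identify each such slope with a derivative of a conditional expectation.

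First I would set up the 2SLS estimator explicitly. The system in the statement is exactly the model of equation \eqref{eq:main_PVAR}, where one treats $W_{it}$ as endogenous, $Z_{it}$ as the single instrument, and the lags $\omega_{it}$ as included exogenous regressors. By Frisch--Waugh--Lovell, the 2SLS coefficient on $W_{it}$ in the equation for $Y_{j,it}$ is identical to the 2SLS coefficient obtained after first projecting $Y_{j,it}$, $W_{it}$, and $Z_{it}$ off $\omega_{it}$. Because the lag matrix $\omega$ enters both stages of the 2SLS through the same projection $M_\omega = I - \omega(\omega'\omega)^{-1}\omega'$, the residualized variables are precisely $\widetilde{\boldsymbol{x}}_{1}$, $\widetilde{\boldsymbol{x}}_{2:J}$ (and a residualized instrument $\widetilde{Z}$, which by Assumption \ref{assu:(Independence)} coincides with $Z$ up to its projection on the lags). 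This is the step where I expect any technical overhead to sit: one needs to argue that the instrument either is mean-independent of $\omega$ or that replacing $Z$ by its own residual $M_\omega Z$ does not alter the identifying structure, which follows from independence and the linearity of conditional expectations under Assumption \ref{assu:(Normalising-assumptions):(1)-}.

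Second, with only one instrument $Z$ and one endogenous variable $\widetilde{W}$, the 2SLS formula collapses to
\[
\beta_{j}^{IV} \;=\; \frac{(Z'Z)^{-1} Z' \, \widetilde{\boldsymbol{x}}_{2:J,\,j}}{(Z'Z)^{-1} Z'\, \widetilde{\boldsymbol{x}}_{1}},
\]
which is exactly the first displayed equality in the lemma. This is just algebra: the fitted first stage is $\widehat{\widetilde{W}} = Z\,(Z'Z)^{-1}Z'\widetilde{\boldsymbol{x}}_1$, and regressing $\widetilde{\boldsymbol{x}}_{2:J,j}$ on $\widehat{\widetilde{W}}$ delivers the indicated ratio.

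Third, I would translate numerator and denominator into population derivatives. By the law of large numbers under Assumption \ref{assu:(Normalising-assumptions):(1)-}(1), $(Z'Z)^{-1}Z'\widetilde{\boldsymbol{x}}_{2:J,j} \to \mathrm{cov}(Z_{it},\widetilde{Y}_{j,it})/\mathrm{var}(Z_{it})$ and similarly for the denominator with $W_{it}$. Under Assumption \ref{assu:(Normalising-assumptions):(1)-}(2), joint normality of $(Z_{it},\widetilde{x}_{it})$ makes the conditional expectation linear, so that $\mathbb{E}[\widetilde{Y}_{j,it}\mid Z_{it}=z^\circ] = a_j + b_j z^\circ$ with $b_j = \mathrm{cov}(Z,\widetilde{Y}_j)/\mathrm{var}(Z)$, and hence $\partial \mathbb{E}[\widetilde{Y}_{j,it}\mid Z_{it}=z^\circ]/\partial z^\circ = b_j$; the identical argument delivers $\partial \mathbb{E}[W_{it}\mid Z_{it}=z^\circ]/\partial z^\circ$ in the denominator. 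Taking the ratio yields the second equality of the lemma. The hardest conceptual step is the first-stage one -- justifying that the residualization against $\omega$ does not contaminate the instrument's relationship with the endogenous variable -- but once one invokes independence together with normality, the remaining computations are routine.
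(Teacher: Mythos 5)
Your proposal is correct, but it reaches the derivative representation by a genuinely different route than the paper. The paper never proves Lemma \ref{lemmaa1} in isolation: the first display (2SLS on residuals equals a ratio of reduced-form to first-stage slopes) is simply asserted, and the derivative interpretation is established inside the proof of Theorem \ref{thm:PVAR-LATE} via an integration-by-parts argument in the style of \citet{Yitazaki1996}, writing $cov(\widetilde{Y}_{j},Z)/var(Z)=\int q(z^{\circ})g^{\prime}(z^{\circ})dz^{\circ}$ with $g(z^{\circ})=\mathbb{E}[\widetilde{Y}_{j}\mid Z=z^{\circ}]$ and weights $q(z^{\circ})=\sigma_{Z}^{-2}\left(\mathbb{E}[Z]F_{Z}(z^{\circ})-\theta_{Z}(z^{\circ})\right)$, and then using normality of $Z$ to collapse the weighted average to $g^{\prime}(z^{\circ})$. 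You instead invoke joint normality of $(Z_{it},\widetilde{x}_{it})$ directly to make $\mathbb{E}[\widetilde{Y}_{j,it}\mid Z_{it}=z^{\circ}]$ exactly linear, so the slope $cov(Z,\widetilde{Y}_{j})/var(Z)$ \emph{is} the derivative. Your route is shorter and cleaner for the stated assumptions (indeed, the paper's final step of pulling $g^{\prime}(z^{\circ})$ out of the integral is only licit because $g^{\prime}$ is constant, which is exactly the linearity you assume up front); the paper's route buys something you do not, namely that without normality the estimand is still interpretable as a weighted average of local derivatives $\int q(z^{\circ})g^{\prime}(z^{\circ})dz^{\circ}$, which is the substantive content behind the $\mu$-LATE reading. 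Your Frisch--Waugh--Lovell discussion of the first display is a genuine addition: the paper's lemma as written even contains a slip (it calls $(\boldsymbol{\omega}^{\prime}\boldsymbol{\omega})^{-1}\boldsymbol{\omega}^{\prime}\boldsymbol{x}$ the residuals rather than $M_{\omega}\boldsymbol{x}$), and your point that the raw instrument $Z$ must either be orthogonal to the lags or itself be residualized for the two-step procedure to coincide with system 2SLS is a real condition the paper leaves implicit. One minor caveat: $(Z^{\prime}Z)^{-1}Z^{\prime}\widetilde{\boldsymbol{x}}$ converges to $\mathbb{E}[Z\widetilde{Y}_{j}]/\mathbb{E}[Z^{2}]$ rather than $cov(Z,\widetilde{Y}_{j})/var(Z)$ when $\mu_{Z}\neq0$, but since the spurious factor is common to numerator and denominator it cancels in the ratio, so the conclusion stands.
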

\begin{proof}
\textbf{Proof of theorem \ref{thm:PVAR-LATE}. }Recall that, from
lemma \ref{lemmaa1},
\[
\beta_{j}^{IV}=\rho_{j}/\gamma=\frac{cov(\widetilde{Y}_{j},Z)}{var(Z)}/\frac{cov(\widetilde{W},Z)}{var(Z)}.
\]
Then, the proof follows as in theorem 4.3 of \citet{pala2024pvarcontrol}
and is similar to \citet{Yitazaki1996}. Because the estimator is
fundamentally composed by $\rho_{j}$ and $\gamma$, it
can be decomposed in two components which capture a similar estimands.
First, let me consider $cov(\widetilde{Y}_{j},Z)$:
\[
\begin{aligned}cov(\widetilde{Y}_{j},Z)= & \mathbb{E}[(\widetilde{Y}_{j}-\mathbb{E}(\widetilde{Y}_{j}))(Z-\mathbb{E}(Z))]\\
 & =\mathbb{E}[(\widetilde{Y}_{j}(Z-\mathbb{E}(Z))]\\
 & =\mathbb{E}[(\mathbb{E}[\widetilde{Y}|Z])(Z-\mathbb{E}(Z)]\\
 & =\int(z^{\circ}-\mathbb{E}(Z))g(z^{\circ})f_{Z}(z^{\circ})dz^{\circ}.
\end{aligned}
\]
Here the first equality holds because of the law of the covariance,
the second holds because the innovations are assumed to be zero mean
for each of the outcome variables, the third holds by the law of total
expectations, and the last holds by rewriting the expected value as
an integral and defining $g(z^{\circ})=\mathbb{E}[\widetilde{Y}_{j}|Z=z^{\circ}]$.
Defining $\nu^{\prime}(m)=(z^{\circ}-\mathbb{E}[\widetilde{Y}_{j}])f_{Z}(m)$,
$v(m)=\int_{-\infty}^{Z}(m-\mathbb{E}[Z])f_{Z}(m)dm$ and $u(z^{\circ})=g(z^{\circ})$
I can apply integration by parts to obtain
\begin{equation}
\begin{aligned}Cov(\widetilde{Y}_{j},Z) & =\end{aligned}
\int_{-\infty}^{Z}(m-\mathbb{E}[Z])f_{Z}(m)dm\,g(z^{\circ})-\int_{-\infty}^{\infty}(\int_{-\infty}^{Z}(m-\mathbb{E}[Z])f_{Z}(m)dm\,g^{\prime}(z^{\circ})dz^{\circ}.\label{eq:equazione_prova_random}
\end{equation}
Notice that the first part converges to zero if the variance of $Z$
exists, and changing the sign to the second part we obtain
\[
\begin{aligned}Cov(\widetilde{Y}_{j},Z) & =\int_{-\infty}^{\infty}(\int_{-\infty}^{Z}(\mathbb{E}[Z]-m)f_{Z}(m)dm\,g^{\prime}(z^{\circ})dz^{\circ}\\
 & =\int_{-\infty}^{\infty}(\mathbb{E}[Z]F_{Z}(z^{\circ})-\theta_{Z}(z^{\circ}))g^{\prime}(z^{\circ})dz^{\circ},
\end{aligned}
\]

where the first equality holds by changing the sign of the second
part of \ref{eq:equazione_prova_random}, the second holds by substituting
the definition of $\theta_{Z}(z^{\circ})=\int_{-\infty}^{Z}mf_{Z}(m)dm$.

And the denominator is $var(Z)=\sigma_{Z}^{2}$. Therefore,
\[
\frac{cov(\widetilde{Y}_{j},Z)}{var(Z)}=\frac{\int_{-\infty}^{\infty}(\mathbb{E}[Z]F_{Z}(z^{\circ})-\theta_{Z}(z^{\circ}))g^{\prime}(z^{\circ})dz^{\circ}}{\sigma_{Z}^{2}},
\]

which is equivalent to the one in the theorem by using the definition
of the weights $q(z^{\circ})=\frac{1}{\sigma_{Z}^{2}}\int_{-\infty}^{\infty}(\mathbb{E}[Z]F_{Z}(z^{\circ})-\theta_{Z}(z^{\circ})).$
Now consider the form
\[
\frac{cov(\widetilde{Y}_{j},Z)}{var(Z)}=\int q(z^{\circ})g'(z^{\circ})dz^{\circ}.
\]

Substituting $q(z^{\circ})=\frac{1}{\sqrt{2\pi}}\int_{-\infty}^{z^{\circ}}me^{-m^{2}/2}dm=\frac{1}{\sqrt{2\pi}}e^{-m^{2}/2}$
inside the definition of $\gamma_{k}$, I obtain the following
\[
\begin{aligned}\frac{cov(\widetilde{Y}_{j},Z)}{var(Z)} & =\int\frac{1}{\sqrt{2\pi}}e^{-m^{2}/2}g^{\prime}(z^{\circ})dz^{\circ}\\
 & =g^{\prime}(z^{\circ})\int\frac{1}{\sqrt{2\pi}}e^{-m^{2}/2}dz^{\circ}\\
 & =g^{\prime}(z^{\circ}),
\end{aligned}
\]

where the first equality is true by substituting the value of the
weights $q(z^{\circ})$, the second by the fact that the density of
$g^{\prime}(z^{\circ})$ does not depend on $z^{\circ}$, and the
last by the laws of integration of normal variables which are satisfied
by $Z$ according to assumption \ref{assu:(Normalising-assumptions):(1)-}(iii).
Then, $\frac{cov(\widetilde{Y}_{j},Z)}{var(Z)}$ can be expressed
as:
\[
\frac{\delta\mathbb{E}[\widetilde{Y}_{j}(z^{\circ})|Z=z^{\circ}]}{\delta z^{\circ}}=\frac{\delta\mathbb{E}[\widetilde{Y}_{j}(z^{\circ})]}{\delta z^{\circ}}+\frac{cov(\widetilde{Y}_{j}(z^{\circ}),1\{Z=z^{\circ}\})}{var(1\{Z=z^{\circ}\})}.
\]

By assumption \ref{assu:(Independence)}, $cov(\widetilde{Y}_{j}(z^{\circ}),1\{Z=z^{\circ}\})=0$,
which returns:
\[
\frac{\delta\mathbb{E}[\widetilde{Y}_{j}(z^{\circ})|Z=z^{\circ}]}{\delta z^{\circ}}=\frac{\delta\mathbb{E}[\widetilde{Y}_{j}(z^{\circ})]}{\delta z^{\circ}}
\]
 Then, applying the same steps to $\frac{cov(\widetilde{W},Z)}{var(Z)}$,
we obtain the result in the theorem.
\end{proof}

\begin{proof}
\textbf{Proof of theorem \ref{thm:(Interpretation-of-the-IRF)}. }The
immediate period impulse response function is computed as $(1,0_{J}^{\prime})B$.
Therefore, each contemporaneous IRF is simply defined as $\text{IRF}_{j}=c\cdot\beta_{j}^{IV}$.
Starting from theorem \ref{thm:PVAR-LATE}, it is therefore easily
possible to recast the IRF
\[
\frac{\delta\mathbb{E}[\widetilde{Y}_{j}(z^{\circ})]}{\delta z^{\circ}}/\frac{\delta\mathbb{E}[\widetilde{W}_{j}(z^{\circ})]}{\delta z^{\circ}}
\]
as the difference between a shock $z$ and a shock $z^{\prime}$
\[
\mathbb{E}[\widetilde{Y}_{j}(z)-\widetilde{Y}_{j}(z^{\prime})]/\mathbb{E}[\widetilde{W}(z)-\widetilde{W}(z^{\prime})]
\]
where $\widetilde{Y}_{j}(z)$ indicates the IRF under the assignment
of a shock and $\widetilde{Y}_{j}(z^{\prime})$ under the assignment
of a different shock. Essentially, as long as the linearity assumptions of $\rho_{j}$ and $\gamma$ are satisfied, the counterfactual assignment imposed when constructing the impulse response function can be interpreted as follows:
\[
\mathbb{E}[\widetilde{W}(z)-\widetilde{W}(z^{\prime})]=\mathbb{E}[\widetilde{W}(z)]=\mathbb{P}[\widetilde{W}(z)=\widetilde{w}].
\]
Moreover,
\[
\begin{aligned}\mathbb{E}[\widetilde{Y}_{j}(z)]=\mathbb{E}[\widetilde{Y}_{j}(\widetilde{W}(z),z)]= & \mathbb{E}[\widetilde{Y}_{j}(\widetilde{w},z)|\widetilde{W}(z)=\widetilde{w}]\mathbb{P}[\widetilde{W}(z)=\widetilde{w}]\\
 & +\mathbb{E}[\widetilde{Y}_{j}(\widetilde{w}^{\prime},z)|\widetilde{W}(z)=\widetilde{w}^{\prime}](1-\mathbb{P}[\widetilde{W}(z)=\widetilde{w}])
\end{aligned}
\]
and
\[
\begin{aligned}\mathbb{E}[\widetilde{Y}_{j}(z^{\prime})]=\mathbb{E}[\widetilde{Y}_{j}(\widetilde{w}^{\prime},z^{\prime})]= & \mathbb{E}[\widetilde{Y}_{j}(\widetilde{w}^{\prime},z^{\prime})|\widetilde{W}(z)=\widetilde{w}]\mathbb{P}[\widetilde{W}(z)=\widetilde{w}]\\
 & +\mathbb{E}[\widetilde{Y}_{j}(\widetilde{w}^{\prime},z^{\prime})|\widetilde{W}(z)=\widetilde{w}^{\prime}](1-\mathbb{P}[\widetilde{W}(z)=\widetilde{w}])
\end{aligned}
\]
Therefore, the numerator becomes equivalent to the classic form of
an ITT as
\[
\begin{aligned}\mathbb{E}[\widetilde{Y}_{j}(z)]-\mathbb{E}[\widetilde{Y}_{j}(z^{\prime})] & =\mathbb{E}[\widetilde{Y}_{j}(\widetilde{w},z)-\widetilde{Y}_{j}(\widetilde{w}^{\prime},z^{\prime})|\widetilde{W}(z)=\widetilde{w}]\mathbb{P}[\widetilde{W}(z)=z]+\\
 & \mathbb{E}[\widetilde{Y}_{j}(w,z)-\widetilde{Y}_{j}(\widetilde{w}^{\prime},z^{\prime})|\widetilde{W}(z)=\widetilde{w}]\mathbb{P}[\widetilde{W}(z)=z]
\end{aligned}
\]
and the estimator becomes
\[
\frac{\mathbb{E}[\widetilde{Y}_{j}(\widetilde{w},z)-\widetilde{Y}_{j}(\widetilde{w}^{\prime},z^{\prime})|\widetilde{W}(z)=\widetilde{w}]\mathbb{P}[\widetilde{W}(z)=z]}{\mathbb{P}[\widetilde{W}(z)=\widetilde{w}]}=\mathbb{E}[\widetilde{Y}_{j}(\widetilde{w})-\widetilde{Y}_{j}(\widetilde{w}^{\prime})|\widetilde{W}(z)=\widetilde{w}]
\]
which concludes the proof.

\clearpage{}
\end{proof}

\subsection{Appendix C: Proofs and simulations regarding inference\label{sec:Appendix-B:-Monte}}

In this appendix, I show that the case of PSVAR-IV is not too dissimilar
from the case of SVAR-IV. In this case, the assumptions required for
the nominal good coverage of the AR statistic are simply related to
the instrument's exogeneity and the convergence of the covariance
matrix of the residuals as well as the covariance between $\widetilde{W}_{i,t}$
and $Z_{i,t}$.
\begin{prop}
\label{prop:AR-coverage}Let $\text{CS}^{\text{AR}}(1-\alpha)$ be
the $\text{AR}$ set. $\mathcal{P}_{T}$ is the probability distribution
of $\{x_{i,t},Z_{i,t}\}_{i=1,t=1}^{I,T}$. $\delta$ is the covariance
of $(Z_{i,t},\widetilde{W}_{i,t})$. Then, suppose:

(i) Assumption 1,..,3 are satisfied, which implies $\mathbb{E}[\widetilde{Y}_{j,i,t},Z_{i,t}]=0$
for all $j=1,..,J$

(ii) $\delta_{T}\rightarrow\delta$

(iii) $\Sigma_{T}\rightarrow\Sigma$

Then, $\lim_{T\rightarrow\infty}\mathcal{P}_{T}(\lambda_{k,s}\in\text{CS}^{\text{AR}}(1-\alpha))=1-\alpha$.
\end{prop}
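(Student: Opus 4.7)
The plan is to follow the standard Anderson--Rubin duality argument, essentially transposing the proof of \citet{Olea2021} from the time-series SVAR-IV to the panel setting. The AR statistic is built around the moment condition $\mathbb{E}[Z_{i,t}(\widetilde{Y}_{j,i,t}-\lambda\widetilde{W}_{i,t})]=0$, which, under Assumptions~\ref{assu:(Independence)}--\ref{assu:(Monotonicity)-For-all}, holds precisely when $\lambda$ equals the true structural parameter $\lambda_{k,s}$ (since by Theorem~\ref{thm:PVAR-LATE} the IV ratio identifies it). The confidence set $\mathrm{CS}^{\mathrm{AR}}(1-\alpha)$ is then the collection of $\lambda$ for which the standardized sample analog of this moment fails to reject at level $\alpha$, so correct coverage is equivalent to a chi-squared limit of the statistic evaluated at the truth.

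The execution would proceed in three steps. First, I would replace estimated by true innovations: under the stationarity part of Assumption~\ref{assu:(Normalising-assumptions):(1)-}, $\hat{\Phi}\to\Phi$ at rate $\sqrt{NT}$ by standard PVAR arguments, and therefore $\hat{\widetilde{x}}_{i,t}-\widetilde{x}_{i,t}$ contributes only a lower-order term to the standardized sum $\frac{1}{\sqrt{NT}}\sum_{i,t}Z_{i,t}(\widetilde{Y}_{j,i,t}-\lambda_{k,s}\widetilde{W}_{i,t})$. Second, I would apply a central limit theorem to this sum, whose mean is zero under hypothesis (i) and whose asymptotic variance is a continuous function of the limits $\delta$ and $\Sigma$ provided by hypotheses (ii) and (iii). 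Third, I would verify that the natural variance estimator built from $\delta_T$ and $\Sigma_T$ is consistent by continuous mapping. Combining these three ingredients yields $\mathrm{AR}(\lambda_{k,s})\xrightarrow{d}\chi^{2}_{1}$, and the standard duality between tests and confidence sets (together with continuity of the AR statistic in $\lambda$, which makes the inversion well-defined) gives the stated coverage.

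The main obstacle is the CLT step, because the panel structure requires controlling serial dependence within each unit $i$, and potentially weak cross-sectional dependence across $i$. The cleanest route, mirroring \citet{Olea2021}, is to assume $(Z_{i,t},\widetilde{x}_{i,t})$ is i.i.d.~across $i$ with sufficient mixing in $t$, so that a panel CLT applies at the $\sqrt{NT}$ rate and the asymptotic variance factorizes neatly into the limits of $\delta_T$ and $\Sigma_T$ appearing in hypotheses (ii)--(iii). A subordinate technical point is showing that the contribution of the estimated autoregressive coefficients does not inflate the limiting variance, which again follows because $\hat{\Phi}-\Phi$ is $O_p((NT)^{-1/2})$ and the residual moments are well-behaved under stationarity and normality.
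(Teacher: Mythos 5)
Your overall route is the same as the paper's: invert an Anderson--Rubin-type test, establish a $\chi^2_1$ limit for the statistic evaluated at the true parameter, and appeal to test/CS duality, all transposed from \citet{Olea2021}. The paper's proof is itself little more than this outline, so at the level of strategy you are aligned. There is, however, one substantive gap in your execution.

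The proposition is about $\lambda_{k,s}$, the impulse response of variable $s$ at horizon $k$, and the paper's statistic is $G_{T}=\sqrt{T}\bigl(e_{s}^{\prime}C_{k}(\hat{\Phi}_{T})\hat{\Gamma}_{T}-\lambda_{k,s}e_{1}^{\prime}\hat{\Gamma}_{T}\bigr)$, where $C_{k}(\cdot)$ is the horizon-$k$ moving-average coefficient and $\hat{\Gamma}_{T}$ stacks the projections of the residuals on $Z_{i,t}$. Your moment condition $\mathbb{E}[Z_{i,t}(\widetilde{Y}_{j,i,t}-\lambda\widetilde{W}_{i,t})]=0$ characterizes only the horizon-$0$ response; for $k\geq1$ the null restriction is $e_{s}^{\prime}C_{k}(\Phi)\Gamma-\lambda e_{1}^{\prime}\Gamma=0$, which is linear in $\Gamma$ (this is what makes the AR construction weak-instrument robust) but nonlinear in $\Phi$. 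Consequently your claim that the estimation error in $\hat{\Phi}$ ``contributes only a lower-order term'' is wrong for $k\geq1$: $\sqrt{T}(\hat{\Phi}-\Phi)$ is $O_{p}(1)$ and enters the limit distribution of $G_{T}$ at first order through the derivative of $C_{k}$, so it must be carried into $\hat{\sigma}_{T}^{2}(\lambda_{k,s})$ via a joint CLT/delta-method for $(\hat{\Phi}_{T},\hat{\Gamma}_{T})$ rather than discarded. Even at horizon $0$, the term $\sqrt{T}(\hat{\Phi}-\Phi)\cdot\frac{1}{T}\sum_{t}Z_{i,t}x_{i,t-1}^{\prime}$ is $O_{p}(1)$ unless the instrument is uncorrelated with the lagged regressors, so negligibility requires an additional lag-exogeneity condition you do not state. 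A secondary point: the paper's asymptotics (see the remark following the proof) are fixed-$N$, $T\rightarrow\infty$ at rate $\sqrt{T}$, whereas you invoke a $\sqrt{NT}$ panel CLT with i.i.d.\ cross-sectional units; with $N$ fixed this is only a renormalization, but you should make the sampling framework explicit since the proposition's conclusion is stated as a limit in $T$ alone.
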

\begin{proof}
\textbf{Proof of proposition \ref{prop:AR-coverage}.}\footnote{The proof follows similarly to \citet{Olea2021} with the difference
that it refers to panel data rather than a time series. It is therefore
hereby reported to highlight their differences.}Let $\lambda_{k,s}$denote the true impulse response coefficient
of variable $s$ to a shock in the variable $k$ and consider the
statistic $G_{T}=[\sqrt{T}(e_{s}^{\prime}C_{k}(\hat{\Phi}_{T})-\lambda_{k,s}e_{s}^{\prime}]$
where $e_{s}^{\prime}$ is a vector that slices an identity matrix
$I_{n},$ so that it selects the impulse response function of variable
$s$. Moreover, $C_{k}(\hat{\Phi}_{T})$ represents the moving average
representation of the autoregressive coefficient $\hat{\Phi}_{T}$
and $\lambda_{k,i}$ represents the impulse response function $\lambda_{k,s}=e_{s}^{\prime}C_{k}(\Phi)\Gamma/e_{s}^{\prime}\Gamma$,
with $\Gamma=(\delta^{\prime},e_{s}^{\prime})^{\prime}$ for all $s=1,..$
and stacks the projections of the instrumented and outcome variables
on $Z_{i,t}$. Then, the AR statistic is by definition
\[
\mathcal{P}_{T}\left(\lambda_{k,s}\in\text{CS}_{T}^{\text{AR}}(1-\alpha)\right)=\mathcal{P}_{T}\left(G_{T}\leq z_{1-\alpha,2}^{2}\hat{\sigma}_{T}^{2}(\lambda_{k,s})\right)
\]
where $\hat{\sigma}_{T}^{2}(\lambda_{k,s})$ is the estimator of the
asymptotic variance of $G_{T}$. Then, the covariance matrix of the
residuals $\Sigma$ is positive definite by assumption \ref{assu:(Normalising-assumptions):(1)-}
and therefore $\sigma^{2}(\lambda_{k,s})\neq0$. Then,
\[
G_{T}^{2}/\hat{\sigma}_{T}^{2}(\lambda_{k,s})\xrightarrow{d}\chi_{1}^{2}
\]
follows from assumption \ref{assu:(Independence)}, \ref{assu:(Exclusion)},
\ref{assu:(Monotonicity)-For-all}. Then, $\lim_{t\rightarrow\infty}\mathcal{P}_{T}(\lambda_{k,s}\in\text{CS}_{T}^{\text{AR}}(1-\alpha))=1-\alpha$.
\end{proof}
\begin{rem}
The asymptotic properties of the $\text{CS}_{T}^{\text{AR}}$ only
require $t\rightarrow\infty$, but not $i\rightarrow\infty$. This
means that the convergence can happen for a fixed number of units,
as long as $t$ goes to infinity.
\end{rem}
\begin{rem}
Notice that assumption \ref{assu:(Relevance).} is not required for
the convergence of the $\text{CS}_{T}^{\text{AR}}$ statistic. Hence,
the validity of the confidence set holds even if, say, $\delta$ is
small and close to zero. In such case, the F-test may be small but
the $\text{CS}_{T}^{\text{AR}}$ will be valid.
\end{rem}
\begin{prop}
\label{prop:AR-plug-in-coverage}Let $\text{CS}^{\text{AR}}(1-\alpha)$
be the $\text{AR}$ set and $\text{CS}^{\text{Plug-in}}(1-\alpha)$
be the plug-in estimator of \citet{Olea2021}. Here $d_{H}$ is the
probability distribution of $H=\left[\begin{array}{c}
e_{s}^{\prime}C_{k}(\Phi)\Gamma_{T}\\
e_{1}^{\prime}\Gamma_{T}
\end{array}\right]$, so that $\hat{H}_{T}$ is the plug-in estimator of $H_{T}$ constructed
by replacing $(\Phi,\Gamma_{T})$ with $(\hat{\Phi},\hat{\Gamma}_{T})$,
so that $\sqrt{T}(\hat{H}_{T}-H_{T})\rightarrow\eta\sim N(0,\Sigma)$

(i) Assumption 1,..,3 are satisfied, which implies $\mathbb{E}[\widetilde{Y}_{j,i,t},Z_{t}]=0$
for all $j=1,..,J$

(ii) $\delta_{T}\rightarrow\delta$

(iii) $\Sigma_{T}\rightarrow\Sigma$

(iv) $\hat{\sigma}_{T,k,s}^{2}\xrightarrow{p}\sigma_{T,k,s}$

Then, $\sqrt{T}d_{H}\left(\text{CS}^{\text{AR}}(1-\alpha),\text{CS}^{\text{Plug-in}}(1-\alpha)\right)\xrightarrow{p}0$
\end{prop}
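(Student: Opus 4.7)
The plan is to show that both confidence sets are asymptotically equivalent to the same random interval centered at the plug-in estimator $\hat{\lambda}_T = \hat{H}_{T,1}/\hat{H}_{T,2}$ with half-width of order $T^{-1/2}$, and then bound the Hausdorff distance by the gap between their endpoints. Throughout, I write $\hat{H}_{T,1} = e_s' C_k(\hat{\Phi}_T)\hat{\Gamma}_T$ and $\hat{H}_{T,2} = e_1'\hat{\Gamma}_T$, so that $\hat{H}_T = (\hat{H}_{T,1}, \hat{H}_{T,2})'$.

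First, I would write both sets in closed form. The plug-in set is the symmetric interval
$$\text{CS}^{\text{Plug-in}}(1-\alpha) = \bigl\{\lambda : \sqrt{T}\,|\lambda - \hat{\lambda}_T| \leq z_{1-\alpha}\,\hat{\sigma}_{T,k,s}\bigr\},$$
while the AR set is the solution of a quadratic inequality,
$$\text{CS}^{\text{AR}}(1-\alpha) = \bigl\{\lambda : T\bigl(\hat{H}_{T,1} - \lambda \hat{H}_{T,2}\bigr)^{2} \leq z_{1-\alpha,2}^{2}\,\hat{\sigma}_T^{2}(\lambda)\bigr\},$$
where $\hat{\sigma}_T^{2}(\lambda)$ is the AR variance estimator from Proposition \ref{prop:AR-coverage}, whose dependence on $\lambda$ enters through the joint covariance of $\hat{H}_{T,1} - \lambda\hat{H}_{T,2}$ implied by $\Sigma_T$.

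Second, I would linearize the AR set around $\hat{\lambda}_T$. Using that the one-instrument AR critical value coincides with the squared normal quantile and factoring $\hat{H}_{T,2}^{2}$, the AR inequality becomes
$$T(\lambda - \hat{\lambda}_T)^{2} \leq z_{1-\alpha}^{2}\,\frac{\hat{\sigma}_T^{2}(\lambda)}{\hat{H}_{T,2}^{2}}.$$
Assumption (ii), together with the implicit non-degeneracy of $e_1'\Gamma$, keeps $\hat{H}_{T,2}^{2}$ bounded away from zero with probability approaching one; assumptions (iii) and (iv) then give $\hat{\sigma}_T^{2}(\hat{\lambda}_T)/\hat{H}_{T,2}^{2} \xrightarrow{p} \sigma_{k,s}^{2}$. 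A first-order Taylor expansion of $\hat{\sigma}_T^{2}(\lambda)/\hat{H}_{T,2}^{2}$ in $\lambda$ about $\hat{\lambda}_T$, combined with $|\lambda - \hat{\lambda}_T| = O_p(T^{-1/2})$ on any candidate set, shows that this ratio equals $\hat{\sigma}_{T,k,s}^{2}$ uniformly up to $o_p(1)$. Hence the AR set coincides with the plug-in interval up to an $o_p(T^{-1/2})$ perturbation of its endpoints.

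Third, since both sets are intervals with common center $\hat{\lambda}_T$ and half-widths differing by $o_p(T^{-1/2})$, the Hausdorff distance is bounded by the maximum absolute endpoint difference, so $\sqrt{T}\,d_H(\text{CS}^{\text{AR}},\text{CS}^{\text{Plug-in}}) \xrightarrow{p} 0$. The main obstacle will be the uniform Taylor control of $\hat{\sigma}_T^{2}(\lambda)$ along the $T^{-1/2}$-shrinking neighborhood of $\hat{\lambda}_T$: the AR variance is a sandwich in $\lambda$ through $\Sigma_T$, and I need its derivative with respect to $\lambda$ at $\hat{\lambda}_T$ to be $O_p(1)$ so that the quadratic remainder is $o_p(T^{-1/2})$. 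Assumption (iii), $\Sigma_T \to \Sigma$, is exactly what supplies the smoothness required to license this delta-method-style argument and suppress the remainder.
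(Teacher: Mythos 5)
Your proposal is correct in outline, but it takes a very different route from the paper: the paper's proof of Proposition \ref{prop:AR-plug-in-coverage} is a one-line deferral to \citet{Olea2021}, Appendix A2.2, plus the remark that the $\sqrt{T}$ rate carries over to the panel under fixed-$N$ asymptotics, whereas you reconstruct the underlying argument explicitly -- the classical Fieller-versus-delta-method comparison, writing both sets as (approximate) intervals and bounding the Hausdorff distance by the scaled endpoint gap. What your version buys is transparency: it makes visible exactly where each hypothesis enters ((iii) and (iv) for the variance plug-in, the shrinking-neighborhood argument for the uniform Taylor control), and it surfaces a condition the proposition silently omits, namely $e_1'\Gamma = \delta \neq 0$. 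That condition is genuinely needed: without relevance the AR set is unbounded with non-vanishing probability while the plug-in interval is always bounded, so the Hausdorff distance is infinite and the claim fails -- this is precisely the regime the paper highlights in the remark following Proposition \ref{prop:AR-coverage}, so flagging it is a real contribution rather than pedantry. Two minor imprecisions to tidy up: the Fieller/AR interval is not exactly centered at $\hat{\lambda}_T$ (its center is offset by $O_p(T^{-1})$, which is harmless but should be said), and the step ``$|\lambda-\hat{\lambda}_T|=O_p(T^{-1/2})$ on any candidate set'' needs the preliminary observation that the leading coefficient $T\hat{H}_{T,2}^2 - z_{1-\alpha}^2\hat{V}_{22}$ of the quadratic is positive with probability approaching one, which is what guarantees the AR set is a bounded interval shrinking at rate $T^{-1/2}$ before the Taylor expansion can be applied uniformly over it.
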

\begin{proof}
\textbf{Proof of proposition \ref{prop:AR-plug-in-coverage}.} The
proof follows from \citet{Olea2021}, Appendix A2.2. The convergence
rate here is also $\sqrt{T}$ by simply relying on the convergence
properties of PVARs under a fixed $N$ asymptotics.
\end{proof}
Finally, the rate of convergence of the coefficient may depend on
several factors, including unit-heterogeneity. To analyse whether
unit heterogeneity may impact the coverage properties of the AR set,
I set up a Monte Carlo simulation that is parametrized according to
the observable data. In this case, I consider $T=39$ and $N=10$
as the original dataset. The data is generated according to the estimated
$\Phi$ and $\Sigma$. Finally, the matrix of the impulse response
coefficients $R$ is set up to be $b/\sqrt{b^{\prime}\Sigma b}$ in
the first column, where $b=(11)^{\prime}$. The remaining columns
of are chosen to satisfy $RR^{\prime}=\Sigma$. The external instrument
is set to be
\[
Z_{t}=\mu_{Z}+\gamma\widetilde{W}_{i,t}+\sigma_{Z}\nu_{t}
\]
where $\mu_{Z}$ is estimated according to the mean of the instrument
(approximately 0) and $\sigma_{Z}$ is the variance of the aggregate
fiscal policy instrument ($0.005$). The concentration parameter $((TN)\alpha)^{2})/Cov(\widetilde{W}_{i,t},Z_{t})$
is computed to be about 204. Hence, figure \ref{fig:Monte-Carlo-simulations}
indicates a good coverage (above 95) from the impulse response on
impact and on the following periods. In this case, because of the
informativeness of the panel component, the convergence appears to
be particularly fast even in a small $T$ scenario (39). However,
the coverage tends to decline as the impulse response function goes
to more and more horizons. This features tends to happen mainly due
to the small point precision of an impulse response function computed
using an AR(1) process. This is not a property shared by the cumulative
impulse response, which instead tends to have good coverage properties
even for large horizons of the IRF. This feature can be seen in figure
\ref{fig:Monte-Carlo-simulations-cumulative}.

\begin{figure}
\centering{}\includegraphics[width=8.3cm]{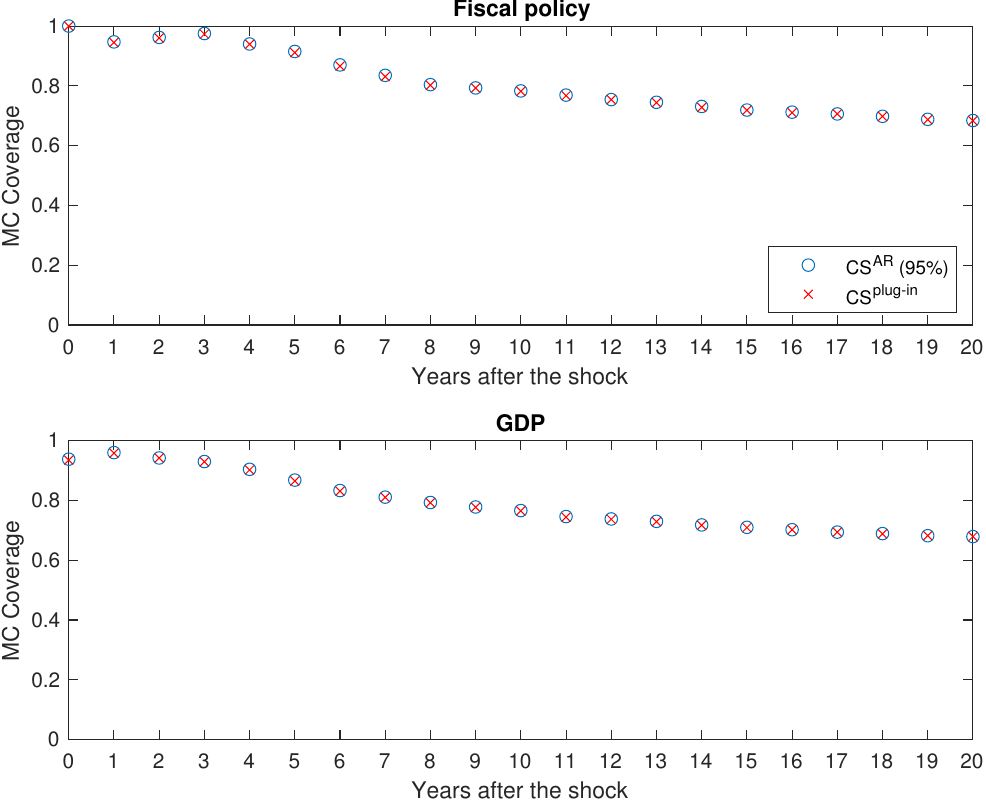}\caption{\label{fig:Monte-Carlo-simulations}Monte Carlo simulations of the
coverage of the IRF.}
\end{figure}
\begin{figure}
\centering{}\includegraphics[width=8.3cm]{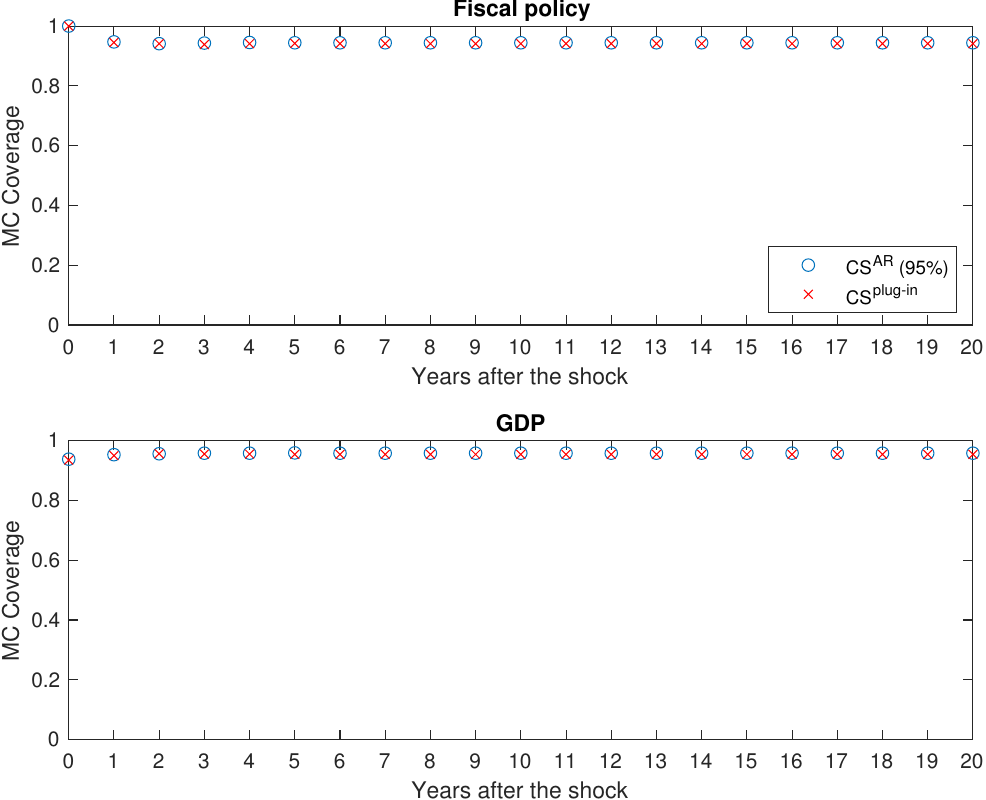}\caption{\label{fig:Monte-Carlo-simulations-cumulative}Monte Carlo simulations
of the coverage of the cumulative IRF.}
\end{figure}

\end{document}